\definecolor{myurlcolor}{rgb}{0,0,0.7}
\definecolor{myrefcolor}{rgb}{0.8,0,0}
\newtheorem{thm}{Theorem}
\newtheorem{lemma}{Lemma}
\newtheorem{rem}{Remark}
\newtheorem{cor}{Corollary}
\newtheorem{prop}{Proposition}
\newtheorem{defn}{Definition}
\renewcommand{\emph}[1]{{\it #1}}
\newcommand{\tr}{\mathrm{Tr }}
\newcommand{\ket}[1]{| {#1} \rangle}
\newcommand{\vect}[1]{{\boldsymbol{#1}}}
\newcommand{\nStab}[3]{\mathcal{C}[[#1, #2]]_{ #3}} % number of stabilizer codes
\newcommand{\Tnkd}{\mathrm{T}(2n, k, \mathbb{Z}_d)} % subgroup T in the symplectic group
\newcommand{\Tnk}[1]{\mathrm{T}(2n, k, \mathbb{Z}_{#1})} % subgroup T in the symplectic group over ring of integers modulo #1
\newcommand{\Spnd}[2]{\mathrm{Sp}(#1, \mathbb{Z}_{#2})} % symplectic group  
\newcommand{\Sp}{\Spnd{2n}{d}} %2n \times 2n symplectic group over Z_d
\newcommand{\GLnd}[1]{\mathrm{GL}(#1, \mathbb{Z}_{d})}  % general linear group over \Z
\newcommand{\GL}[2]{\mathrm{GL}(#1, \mathbb{Z}_{#2})} % general linear group over ring of integers modulo #2
\newcommand{\BS}{B_S(n,k,d)} % symmetric abelian subgroup 
\newcommand{\BA}{B_A(n,k,d)} % antisymmetric abelian subgroup
\newcommand{\unity}{\mathbbm{1}} %the identity linear operator.
\newcommand{\Ps}{P^{(d)}} %the single-qubit Pauli group.
\newcommand{\Pn}{\mathcal{P}^{(d)}_n} %the n-qubit Pauli group over \Z.
\newcommand{\Ppm}{\mathcal{P}^{(p^m)}} %the n-qubit Pauli group over \Z_{p^m}.
\newcommand{\Pf}{\mathcal{P}^{(\mathbb{F}_{p^m})}} %the n-qubit Pauli group over the finite field .
\newcommand{\Pl}{\mathcal{L}^{(d)}_n} %the n-qubit Pauli group.
\newcommand{\Cl}{\mathcal{C}^{(d)}_n} %the n-qubit Clifford group.
\newcommand{\C}{\mathbb{C}} % the field of complex numbers.
\newcommand{\Z}{\mathbb{Z}_{d}} % Ring of integers {0,1,2,\cdots, d-1}
\newcommand{\Zj}[1]{\mathbb{Z}_{p^{#1}}} % Ring of integers modulo p^{#1}
\newcommand{\Zn}{\mathbb{Z}_{d}^n} % $n$ fold Cartesian product of the ring of integers {0,1,2,\cdots, d-1}
\newcommand{\Znn}{\mathbb{Z}_{d}^{2n}} % $2n$ fold Cartesian product of the ring of
\newcommand{\F}{\mathbb{F}_p} % field of integers {0,1,\cdots,p-1}. 
\newcommand{\X}{X} % X operators for a single qudit.
\newcommand{\ZZ}{Z} % Z operator for a single qudit. 
\renewcommand{\L}{\Lambda} % Anti-symmetric inner product matrix.
\newcommand{\He}{H_{\mathrm{ext}}} % In appendix B, the extension of H.
\newcommand{\M}{\mathbb{M}_{2n}\left(\Z\right)} % algebra of 2nX2n matrices over \Z
\newcommand{\GM}{\tilde{\Gamma}}
\newcommand{\Mtn}[1]{\mathbb{M}_{2n}\left( \mathbb{Z}_{p_{#1}^{m_{#1}}} \right)} % 2n x 2n matrices over $ \mathbb{Z}_{p_i^{m_i}}
\renewenvironment{psmallmatrix} 
  {\left(\begin{smallmatrix}}
  {\end{smallmatrix}\right)} %small matrices, to introduce small matrices in the text.
\newcommand{\CS}{\mathcal{C}(S)} % stabilizer code. 
\newcommand{\CSp}{\mathcal{C}_{\mathrm{space}}[[n,k]]_d} % the number of stabilizer code spaces 
  \newcommand{\CSt}{\mathcal{C}_{\mathrm{state}}[[n]]_d} % the number of stabilizer states.
  \newcommand{\CStw}{\mathcal{C}_{\mathrm{state}}[[n]]_2}  % the number of stabilizer states for d=2
  \newcommand{\g}{g} %symbol for a Pauli group element. This maybe changed as needed.
\newcommand{\tanmayc}[1]{{ #1}}
\newcommand\SmallMatrix[1]{{%
  \tiny\arraycolsep=0.3\arraycolsep\ensuremath{\begin{pmatrix}#1\end{pmatrix}}}} % for inline small matrices
\newcommand{\rvline}{\hspace*{-\arraycolsep}\vline\hspace*{-\arraycolsep}} % dividing lines for matrix blockes in matrices
\begin{document}

\title{Counting stabilizer codes for arbitrary dimension}

\author{Tanmay Singal}
\email{tanmaysingal@gmail.com}
\orcid{0000-0001-5986-248X}
\affiliation{Institute of Physics, Faculty of Physics, Astronomy and Informatics,
Nicolaus Copernicus University, \\ Grudzi\c{a}dzka 5/7, 87-100 Toru\'n, Poland}
\author{Che Chiang}
\email{cchiang314@gmail.com}
\affiliation{Department of Physics and Center for Theoretical Physics, National Taiwan University, Taipei 10617, Taiwan}
%\affiliation{Department of Physics, National Taiwan University, Taipei, 10617, Taiwan}
\author{Eugene Hsu}
\email{a13579230@gmail.com}
\affiliation{Quantum information center, Chung Yuan Christian University, No. 200, Zhongbei Rd., Zhongli Dist., Taoyuan City 320314, Taiwan}
\author{Eunsang Kim}
\email{eskim@hanyang.ac.kr}
\affiliation{Department of mathematical Data Science, Hanyang University, Ansan, Gyeonggi-do, 15588, Korea}
\author{Hsi-Sheng Goan}
\email{goan@phys.ntu.edu.tw}
\affiliation{Department of Physics and Center for Theoretical Physics, National Taiwan University, Taipei 10617, Taiwan}
\affiliation{Center for Quantum Science and Engineering, National Taiwan University, Taipei 10617, Taiwan}
\affiliation{Physics Division, National Center for Theoretical Sciences, Taipei, 10617, Taiwan}
\author{Min-Hsiu Hsieh}
\email{min-hsiu.hsieh@foxconn.com}
\orcid{0000-0002-3396-8427}
\affiliation{Hon Hai Quantum Computing Research Center, Taipei, Taiwan}

\begin{abstract}
 In this work, we compute the number of $[[n,k]]_d$ stabilizer codes made up of $d$-dimensional qudits, for arbitrary positive integers $d$. In a seminal work by Gross (Ref. \cite{Gross2006}) the number of $[[n,k]]_d$ stabilizer codes was computed for the case when $d$ is a prime (or the power of a prime, i.e., $d=p^m$, but when the qudits are Galois-qudits). The proof in Ref. \cite{Gross2006} is inapplicable to the non-prime case. For our proof, we introduce a group structure to $[[n,k]]_d$ codes, and use this in conjunction with the Chinese remainder theorem to count the number of $[[n,k]]_d$ codes. Our work overlaps with Ref. \cite{Gross2006} when $d$ is a prime and in this case our results match exactly, but the results differ for the more generic case. Despite that, the overall order of magnitude of the number of stabilizer codes scales agnostic of whether the dimension is prime or non-prime. This is surprising since the method employed to count the number of stabilizer states (or more generally stabilizer codes) depends on whether $d$ is prime or not.  The cardinality of stabilizer states, which was so far known only for the prime-dimensional case (and the Galois qudit prime-power dimensional case) plays an important role as a quantifier in many topics in quantum computing. Salient among these are the resource theory of magic, design theory, de Finetti theorem for stabilizer states, the study and optimisation of the classical simulability of Clifford circuits, the study of quantum contextuality of small-dimensional systems and the study of Wigner-functions. Our work makes available this quantifier for the generic case, and thus is an important step needed to place results for quantum computing with non-prime dimensional quantum systems on the same pedestal as prime-dimensional systems.
\end{abstract}

\section{Introduction}
\label{sec:introduction}
The stabilizer formalism (which will be explained in detail in Section \ref{sec:pre}) has become an indispensable part of the study of quantum computing. Some of its salient applications are as follows: it forms the bedrock for the vast field of quantum error correcting codes (QECC)\cite{Gottesman97}. Randomising over stabilizer states or Clifford unitaries (see Eq. \eqref{eq:Cliff} for the definition of a Clifford unitary) serves as an important application, for instance, for computing capacities of quantum channels\cite{Bennet96}, data hiding \cite{DiVincenzo2002} and the study of noise-compounding in quantum circuits (randomized benchmarking) \cite{Knill2008, Magesan2011}. It also demarcates a boundary between the classical simulability of quantum computations via the Gottesmann-Knill theorem \cite{Gottesmann99}, and the onset of quantum complexity for which some non-stabilizerness is necessary. The non-universal nature of stabilizer operations \cite{Aaronson2004} and the need to compensate this deficiency with non-stabilizer operations manifests in a resource theory of ``non-stabilizerness", a.k.a., \emph{magic}  (see Ref. \cite{Veitch12, Veitch14, Howard17}), where stabilizer operations are the free resource (see Subsection \ref{subsec:resource} for an explanation of stabilizer operations and the resource theory with respect to which it is defined). While the basic mathematical preliminaries which support all the above have been developed in many works (for e.g. see Ref. \cite{Vourdas2004}, \cite{Gross2006}), there remain some gaps, particularly for quantum computing with multiqudit systems for arbitrary $d$. And while qubit systems are envisaged as the paradigmatic building blocks of quantum computing, qudits of larger dimensions may offer their own benefits (for e.g. see Ref. \cite{Cianci2013,Anwar2014,Campbell2012,Campbell2014,Krishna2019,Heyfron2019}), owing to which it is imperative to study them. Among these, it is easier to study multiqudit systems when the dimension $d$ is a prime number, or when $d$ is the power of a prime, i.e., $d=p^m$, while, simultaneously, the configuration space of the qudit system is the Galois field $\mathbb{F}_{p^m}$. We call such a qudit a Galois-qudit \cite{Galois_qudit}. Many important results which were obtained for qubits may be easily generalised to such qudit systems (this has been noted in many places, for e.g. in the introduction in Ref. \cite{Dangniam20}). This contrasts with qudit systems, whose configuration space is $\Z$, when $d$ is not a prime number. We refer to such qudit systems simply as qudits since this work is about such qudits, but wherever disambiguation is required, we may instead refer to them as modular qudits\cite{modular_qudit}. While there has been a lot of seminal work on such modular qudit systems\cite{Weyl1932,Schwinger1960,Appleby2005,Hostens2005,Gross2006, Bullock2007, Bombin2007, Appleby2012, Gheorghiu2014,Bengtsson2017}, some significant gaps remain. For instance, one of those gaps is the computation of the number of $[[n,k]]_d$ stabilizer codes of such qudits (stabilizer codes are defined in Subsection \ref{subsec:stabilizer}). We will label this number as $\nStab{n}{k}{d}$ \tanmayc{(see Remark \ref{rem:distance} for disambiguation on the usage of the symbol $d$)}. $\nStab{n}{k}{2}$ was computed in a seminal paper Ref. \cite{Aaronson2004} (a recent proof also appeared in Ref. \cite{Lovitz22}), and for Galois-qudit systems, $\nStab{n}{k}{\mathbb{F}_{p^m}}$ was computed in another seminal paper, Ref. \cite{Gross2006}. In this work we compute $\nStab{n}{k}{d}$ for arbitrary $d \ge 2$. Our work overlaps with earlier works\cite{Aaronson2004, Gross2006, Lovitz22} for the case when $d$ is a prime number, and in that case our results match exactly. %\cite{Veitch14,Howard17,Heinrich19} \cite{Hakkaku21} \cite{Kueng15} \cite{Howard13}\cite{Howard15} 
Typically one is interested in the order of magnitude of $\nStab{n}{k}{d}$ since it may act as an important quantifier. For such a reader, we present here the main result.
\begin{cor}[Corollary \ref{cor:nkd_count_order}]
\label{cor:nkd_count_order_intro}
 Let $c=2.17$. Then \tanmayc{the} number of $[[n,k]]_d$ stabilizer codes scales as
\begin{equation}
    \label{eq:nkd_count_order_intro}
   d^{\frac{(n-k)\left(n+3k+1\right)}{2}} \ \le \nStab{n}{k}{d} \ < \ \  d^{\frac{(n-k)\left(n+3k+1\right)}{2}+c},
\end{equation} the number of $[[n,k]]_d$ stabilizer code spaces scale as 
\begin{equation}
    \label{eq:nkd_count_order_code_space_intro}
   d^{\frac{(n-k)\left(n+3k+3\right)}{2}} \ \le \CSp \ < \ \  d^{\frac{(n-k)\left(n+3k+3\right)}{2}+c},
\end{equation} and thus the number of stabilizer states scale as 
\begin{equation}
    \label{eq:nkd_count_order_states_intro}
   d^{\frac{n(n+3)}{2}} \ \le \CSt \ < \ \  d^{\frac{n(n+3)}{2}+c}.
\end{equation}
\end{cor}
Here we make a distinction between an $[[n,k]]_d$ \emph{stabilizer code}, and the corresponding \emph{stabilizer code spaces}, which is one of the $d^{n-k}$ subspaces associated with such a stabilizer code.% (see Remark \ref{rem:Cosets}).
\tanmayc{\begin{rem}
\label{rem:distance} One often encounters the phrase $[[n,k,d]]$ code, wherein parameter $d$ stands for the \emph{distance} of the code. In our usage, $d$ refer to the dimension of the Hilbert space of a single qudit, and not the distance of the code.
\end{rem}}

The number of stabilizer states, i.e., $\CSt$, plays the role of an important quantifier in quantum computing. Before this work, $\CSt$ was known only for prime $d$ (and for $\mathbb{F}_{p^m}$ number systems), and, thus, its application as a quantifier was limited to these cases. We list below the salient topics where it has been used as a quantifier before.  %Unless you count THIS work https://iopscience.iop.org/article/10.1088/1742-6596/574/1/012096, which counts it wrongly.
\subsection{Resource theory of magic}
\label{subsec:resource} To initiate the interested reader into the resource theory of stabilizer-operations we refer them to a short summary in Appendix \ref{app:resource}. For more comprehensive treatments on the topic, see Ref. \cite{Veitch14,Howard17,Heinrich19}. For some recent interesting developments in the topic, we further refer the reader to Ref. \cite{Heimendahl20}.
The most significant application of $\CSt$ is that stabilizer states are the extremal points of the stabilizer polytope, and many measures of magic are defined as optimizations over this polytope. Since $\CSt$ scales super-exponentially for prime-dimensional systems (and Galois-qudit systems), computing these measures of magic is an intractable problem for such systems. The earliest known measure in this category is the relative entropy of magic\footnote{In Ref. \cite{Veitch14}, a computable measure of magic called \emph{mana} was also introduced. This measure, while efficiently computable, is computable for only odd-dimensional quantum systems. That being said it has a more significant operational interpretation compared to the relative entropy of magic.} \cite{Veitch14}, after which the robustness of magic was defined in Ref. \cite{Howard17}. The robustness of magic has an operational interpretation: it gives an upper bound on the classical simulation complexity of the quantum Clifford$+\pi/8$ circuits (see also Ref. \cite{Pashayan15}), and scales exponentially in the number of $\pi/8$ gates. It may also be used for the optimality of $\pi/8$-gate counts in the gate-synthesis problem. In Ref. \cite{Heinrich19} the regularised robustness of magic was introduced, and its computation, while superpolynomially faster, still suffers from an exponential run-time. We also refer the reader to a recent work Ref. \cite{Hakkaku21} which tries to optimize the classical simulability of Clifford$+\pi/8$ circuits, in the manner introduced by Ref. \cite{Pashayan15}. The intractability of these measures has also encouraged the search for new quantifiers of magic, for instance, see Ref. \cite{Dai22}. 
Before our work, it wasn't known how $\CSt$ scales with $n$ for arbitrary $d$. Since Corollary \ref{cor:nkd_count_order_intro} states that $\CSt$ scales super-exponentially in $n$ for all $d$, we now know that the problem of computing the aforementioned measures of magic is intractable for arbitrary dimension.
%Among the earliest works in this topic, a characterisation of the convex polytope of stabilizer states was sought in Ref. \cite{Veitch12}. Given the super-exponential scaling $\CSt$, an alternative characterisation was obtained. Next, to place the resource theory on a concrete footing, measures of the non-stabilizerness a.k.a. magic of a quantum state were proposed. While some measures like mana may be computed easily for a quantum state of an odd-dimensional qudit, there are many others whose computation requires an optimisation over the convex polytope of stabilizer states. The salient ones among these include the relative entropy of magic (Ref. \cite{Veitch14}), the robustness of magic (Ref. \cite{Howard17}; see also  Ref. \cite{Hakkaku21}, which employs the stabilizer sampling of magic states to study the classical simulation of quantum algorithms with non-stabilizer operations), and the regularised robustness of magic  (Ref. \cite{Heinrich19}). The super-exponential scaling of the stabilizer states renders the optimisation intractible, and has lead to recent alternative quantifiers of magic (for e.g. Ref. \cite{Dai22}). 
\tanmayc{\subsection{Classical simulation of stabilizer-only operations}
\label{subsec:stim}
The resource theoretic perspective of Subsection \ref{subsec:resource} deems stabilizer operations to be a free resource. The freeness of this resource comes into question if the classical simulation of stabilizer operations has appreciable costs. It is with the aim of minimizing this simulation cost that a qubit stabilizer simulator such as STIM (see Ref. \cite{STIM}) was developed. Any such simulation requires as a pre-requisite adequate memory to store the stabilizer state. Ref. \cite{Aaronson2004,Gross2006} inform us that this is $\mathcal{O}\left(n^2\right)$ bits for qubit systems. Our work supplies this knowledge for qudit systems for general $d$.
%The resource theoretic perspective of Subsection \ref{subsec:resource}, which deems stabilizer operations to be a free resource, is an approximation, albeit a useful one. There are significant costs associated with stabilizer-only operations as well. For example, in Ref. \cite{STIM}, the surface code with fifteen thousand qubits was sought to be simulated classically, in an online-fashion. The classical simulation of an $n$-qubit stabilizer state requires one to keep track of $\frac{1}{2} n^2 $ bits during the course of the computation.  Appropriate optimisation schemes are required for such simulations, particularly, when the goal is to simulate large quantum circuits, for instance a circuit for the surface code with fifteen thousand qubits (for e.g., see Ref. \cite{STIM}).  
}
\subsection{Frame potentials for stabilizer states} 

\label{subsec:frame} In Ref. \cite{Kueng15}, Kueng and Gross established that a uniform ensemble of multiqubit stabilizer states are complex projective $3$-designs. To prove this, they computed explicitly the frame-potential associated with this ensemble of states. And this frame potential was computed using $\CStw$, which was borrowed from Ref. \cite{Aaronson2004} and Ref. \cite{Gross2006}. While for $d \ge 3$, it is known that multi-qudit stabilizer states aren't projective $3$-designs, one may nevertheless still be interested in the projective-design which this ensemble gives rise to. While the result in Ref. \cite{Gross2006} may be employed for this purpose for only prime values of $d$, our results allows one to obtain the result for arbitrary values of $d$.
\subsection{de Finetti theorem for stabilizer states} 
\label{subsec:de_finnetti}
In Ref. \cite{Gross2017} a de Finetti theorem for stabilizer states was established for quantum states on $t$-copes of $n$-qudit systems, where $d$ is prime. To summarize it, let us consider the action of the $t$-th tensor power of the $n$-qudit Clifford group on $t$-copies of an $n$-qudit system, i.e., $U^{\otimes t}$, where $U \in \Cl$, and let $\mathcal{L}$ be the commutant of this $t$th tensor power of the Clifford group, i.e., $L \in \mathcal{L} $ implies that $U^{\otimes t} L {\left( U^\dag\right)}^{\otimes t} = L$. Let a quantum state $\rho$ commute with all $L$ in $\mathcal{L}$. Then $\rho$ has the following property. Define $\rho_s = \tr_{i_1, i_2, \cdots i_{t-s}} \rho$ be the reduced state, which is obtained by tracing out $t-s$ subsystems from $\rho$. The subsystems which are traced out are arbitrary. Then there exists some probability distribution $p_{\sigma_S}$ on the $n$-qudit stabilizer states $\sigma_S$ such that \begin{equation}
    \label{eq:definnetti_1}
    \dfrac{1}{2} \left| \left|  \ \rho_s \ - \ \sum_{\sigma_S} \ p_{\sigma_S} \ \sigma_S^{\otimes s} \ \right|\right|_1 \ \le \ \mathrm{C} d^{\mathcal{O}\left(n^2\right)} \ d^{-\frac{1}{2}\left( t-s \right)},
\end{equation} where $C$ is some constant, $\sigma_S$ is a stabilizer state of an $n$-qudit systems and $p_{\sigma_S}$ is a probability distribution on these states. The exponential scaling in the number of traced out subsystems, i.e., $t-s$ contrasts with the ordinary de Finetti result, whose scaling is of the form $\mathcal{O}\left( s/t \right)$ (see Ref. \cite{Christandl07}). The exponential scaling of the stabilizer version of the de Finetti theorem is proved using $\CSt$, which was computed in Ref. \cite{Gross2006} for prime $d$ (and Galois qudits). We anticipate it to be possible for Corollary \ref{cor:nkd_count_order_intro} to provide a similar exponential scaling for the case when $d$ is non-prime.
\subsection{Others}
\label{subsec:others}
We mention some other topics where $\CSt$ for prime $d$ was employed. We hope that our result could find similar applications for non-prime $d$. \begin{itemize}
    \item[(i)] Quantum contextuality: In Ref. \cite{Howard13}, Howard and collaborators study quantum stabilizer states, to see if these states exhibit any quantum contextual correlations. For odd prime $d$, the set of $\CSt$ two-qudit stabilizer states is partitioned into two classes: separable states and entangled states. Each class is separately studied to see if it exhibits state dependent contextuality, and the number of elements within each class plays an important role in this. 
    \item[(ii)] Disambiguation of Wigner functions for odd dimensional qudit systems: While Hudson's theorem for finite dimensional quantum systems was proved in Ref. \cite{Gross2006}, the choice of the Wigner function with which it was proved wasn't disambiguated. In Ref. \cite{Howard15}, Howard establishes that there is a unique choice of the Wigner function which supports Hudson's theorem. The proof of this result employs the explicit formula for $\CSt$. 
   % \item[(iii)] Optimal verification of quantum stabilizer states. In Ref. \cite{Dangniam20}, optimal schemes to verify stabilizer states are given. Knowing $\CSt$, or at least the order of magnitude of $\CSt$ is essential for this task. While the results in Ref. \cite{Dangniam20} are given for qubit systems, Dangniam et al. believe that those results are generalisable to the case when $d$ is an arbitrary prime number. 
\end{itemize} Our work readily generalises results for the resource theory of magic and for the classical simulability of stabilizer-only circuits. For other topics mentioned above, we anticipate that our computation of $\CSt$ will facilitate comparable results for arbitrary $d$-dimensional systems.   
\tanmayc{\begin{rem}
    \label{rem:CSp} In contrast to $\CSt$, it is difficult to find applications for $\CSp$ (for $k \ge 1$). We believe that this does not exclude it altogether from being a quantity of interest for the quantum computing and information community. For instance see the discussion in Ref. \cite{Gogioso22}: the question posed there is whether there is a known way to sample efficiently from the set of all $[[n,k]]_2$ stabilizer groups for arbitrary $k$. Ref. \cite{Gross2006} may be used to give a sampling algorithm for this purpose.
\end{rem}}
%We anticipate that this result would significant for a number of topics, for instance, computing measures of magic of arbitrary quantum states \cite{Veitch14,Howard17,Heinrich19}, the frame-potentials of stabilizer states \cite{Kueng15}, the study of Wigner functions \cite{Howard15}, the de Finetti theorem for stabilizer operations \cite{Gross2017}, optimal verification of stabilizer states \cite{Dangniam20}, the classical simulation of stabilizer-only quantum circuits \cite{STIM}, or even classical simulation of Clifford circuits with magic states \cite{Hakkaku21} and quantum contextuality for smaller number of multiqudit systems \cite{Howard13}. See Section \ref{sec:applications} for a brief survey of these results. \newline  % the most salient of which we mention next. (i) Many measures of magic are computed by optimization over the polytope of $n$-qudit stabilizer states. Since stabilizer states are the extremal points of this polytope, the tractability or intractability is determined by how $\CSt$ scales. Note that $\CSt$ is the number of stabilizer states, since a stabilizer state corresponds to the case when $k=0$. The most salient among these measures of magic are: relative entropy of magic \tanmay{INSERT CITATION}, the robustness of magic \tanmay{INSERT CITATION}, and the regularised robustness of magic \tanmay{INSERT CITATION}.  (ii) Verification of stabilizer states/ stabilizer $[[n,k]]_d$ QECCs. 
The challenge in computing $\nStab{n}{k}{d}$, for non-prime $d$, is that $\Z$ is not a field. In particular, only elements which are co-prime with $d$ will have multiplicative inverses. The phase space of an $n$-modular qudit system is the $2n-$fold Cartesian product of $\Z$, i.e., $\Znn$. We need to treat it like a vector space which is defined over a field. Towards this end, we will borrow some basics terms and definitions from the theory of vector spaces wherever this is appropriate. We stress though, that the jargon we borrow from linear algebra is actually superfluous, and all the necessary concepts can be phrased in terms of more primitive concepts in the theory of abelian groups and its subgroups. Nevertheless, a rigorous justification of viewing $\Znn$ as a vector space and applying concepts and definitions from the theory of vector spaces will be justified in the appendix. 
$\nStab{n}{k}{d}$ is computed in two steps. (i) We introduce a group theoretic structure of $[[n,k]]_d$ stabilizer codes constructed in the following way: we take a $2n \times (n-k)$ check-matrix of a given code, and extend it to a $2n \times 2n$ symplectic matrix. This extension isn't unique, and the set of all such symplectic extensions realizable for a given $[[n,k]]_d$ code forms a unique coset of $\Sp/\Tnkd$, where $\Tnkd$ is a subgroup of matrices in $\Sp$ which forms the coset corresponding to the $[[n,k]]_d$ trivial code. Conversely, to each coset of $\Sp/\Tnkd$ one can attribute a unique $[[n,k]]_d$ stabilizer code, and thus a bijection between $[[n,k]]_d$ stabilizer codes and cosets of $\Sp/\Tnkd$ is established. This tells us that $\nStab{n}{k}{d} \ = \ \left| \Sp \right|/\left| \Tnkd \right|$. (ii) To compute $\left| \Tnkd \right|$, we find a way to decompose each element of $\Tnkd$ into an ordered product of four matrices, each of which belongs to a distinct subgroup of $\Tnkd$. These subgroups overlaps only over the identity element. Thus $\left| \Tnkd \right|$ is the product of the orders of these subgroups. The orders of these subgroups and of $\left| \Sp \right|$ are computed, first for the case when $d=p^m$, i.e., the power of a prime. This corresponds to the case when the configuration space of the qudit is $\mathbb{Z}_{p^m}$ (and not $\mathbb{F}_{p^m}$). Some of the details of this computation, which employ standard tricks for such computations, are relegated to Section \ref{app:sec_sp_order} in the Appendix. This gives us $\nStab{n}{k}{p^m}$. To obtain $\nStab{n}{k}{d}$, we invoke the Chinese remainder theorem, which is a ring isomorphism, as follows: $\Z \simeq \mathbb{Z}_{p_1}^{m_1} \times \mathbb{Z}_{p_2}^{m_2} \times \cdots \times \mathbb{Z}_{p_r}^{m_r}$, where $d=p_1^{m_1}p_2^{m_2}\cdots p_r^{m_r}$ is the prime factorisation of $d$. While the proof method depends on whether $d$ is prime or non-prime, the number of stabilizer codes $\nStab{n}{k}{d}$, the stabilizer code subspaces $\CSp$ and the stabilizer states $\CSt$ scale agnostic of $d$ being prime or non-prime, which is surprising.  \newline \tanmayc{The above strategy to compute $\nStab{n}{k}{d}$ was also used in \cite{Hanlon2005}, but for the case when $\Z$ is a field\footnote{\tanmayc{$\Z$ is denoted by $\mathbb{F}_q$ in \cite{Hanlon2005}. Eq. (7) in \cite{Hanlon2005} equals the result in Theorem (20) in \cite{Gross2006}. See also \cite{Hanlon_comment}.}}. And in this case, we already know simpler methods to compute $\nStab{n}{k}{d}$, for e.g. \cite{Aaronson2004,Gross2006}.} \newline ~ \newline
The paper is organised as follows. In Section \ref{sec:pre} we give the necessary preliminaries of the stabilizer formalism: we will introduce the $n$-qudit Weyl-Heisenberg group (also known as the generalised Pauli group), we define stabilizer codes along with their check matrices and related concepts. In Section \ref{sec:groupstructure}, we explain the group theoretical structure of $[[n,k]]_d$ stabilizer codes, and in Section \ref{sec:decomposition_tnkd} we show how an arbitrary element of $\Tnkd$ may be decomposed into a product of elements from four different subgroups. In Section \ref{sec:cardinality_nkd} we count $\nStab{n}{k}{d}$. We conclude in Section \ref{sec:conclusion}.
\section{Preliminaries}
\label{sec:pre}
Various preliminaries are introduced in this section, along with the notations and explanations of their usage. 
 %I am not sure whether to keep this as a subsection in the introduction or whether to move it to the appendix. For now I'll let it be here.
\subsection{The ring $Z_d$}
\label{subsec:Z_d}
Let $\mathbb{Z}$ denote the set of all integers. Let $d$ be an arbitrary positive integer such that $d \ge 2$. $\Z$ then denotes the set of integers modulo $d$: $\Z = \left\{ 0, 1, \cdots, d-1 \right\}$. All arithmetic operations in $\Z$ are carried out modulo $d$. This means that for any $a,b \in \Z$,
\begin{align}
    \label{eq:Z_addition}
    a \  \oplus_d \  b \ = \ \left( a + b \right) \mod \ d \notag \\ 
    a \ \times_d \ b \ = \ \left(  a \cdot b  \right)\mod d,
\end{align} 
where, on the left hand side of Eq. \eqref{eq:Z_addition}, $\oplus_d$ and $\times_d$ represent addition and multiplication operations in $\Z$, and on the right hand side, $+$ and $\cdot$ represent addition and multiplication in $\mathbb{Z}$. From here onward, we will use the standard notations of addition and multiplication, even in $\Z$, and let the context determine where these operations are meant to be performed.  $\Z$ is a cyclic group under addition, with $0$ as the additive identity. When $d$ is a composite number, for $a \in \Z$ to have a multiplicative inverse in $\Z$, $a$ and $d$ in $\mathbb{Z}$ need to be co-prime.  The set of all non-zero coprime elements form the multiplicative group $\Z^\times$. Hence, $\Z$ is a field if and only if $d$ is a prime number.  When $d$ is a prime, we will denote it by $p$ and denote the corresponding field by $\F$.
\subsection{Using $\Zn$ as a vector space}
Let $n$ be a positive integer. The $n$-fold Cartesian product of $\Z$ with itself, $\Z \times \Z \times \cdots \Z$ contains all ordered $n$-tuples $(a_1,a_2,\cdots, a_n)$ where $a_j \in \Z$. Addition in $\Zn$ is defined point-wise: for arbitrary $a,b \in \Zn$, $(a+b)_i = a_i + b_i$, where $(a+b)_i$ denotes the $i$-th component of $a+b$. Similarly, scalar multiplication of $a$ in $\Zn$ with some $\lambda \in \Z$ means $\lambda a \ = \ (\lambda a_1, \lambda a_2, \cdots, \lambda a_n ).$ $\Zn$ satisfies all the axioms which a usual vector space will satisfy. Thus, any $\vect{a}=(a_1,\cdots,a_n)\in\Zn$ will be called a vector. That being said, since $\Z$ is not necessarily a field, $\Zn$ doesn't form a vector space. Technically, $\Zn$ is called a module over a commutative ring \cite{Artin}. The algebra in $\Zn$ is richer than that for vector spaces defined over fields, and one may not blindly generalise results from vector spaces over fields to $\Zn$. But we will be able to {\it{borrow}} the following concepts from linear algebra. \begin{defn}[Linear independence]
\label{defn:LI}
Let $\vect{a_1}, \vect{a_2}, \cdots, \vect{a_m} \in \Zn$ be $m$ vectors. Then they are linearly independent in $\Zn$ if and only if the only solution for the unknowns $x_1$, $x_2$, $\cdots$, $x_m \in \Z$ in the following equation
\begin{equation}
 \label{eq:LI}
 x_1 a_1 + x_2 a_2 + \cdots + x_m a_m \ = \ 0
\end{equation}
is that $x_1$, $x_2$, $\cdots$, $x_m=0$.
\end{defn}
We now list some corollaries which follow from Definition ~ \ref{defn:LI}.
\begin{cor}
\label{Cor:LI}
\begin{enumerate}
    \item[(i)] Any set of $n$ LI vectors forms a {\it basis} for $\Zn$.
    \item[(ii)] If $\left\{ \vect{a_1}, \vect{a_2}, \cdots , \vect{a_m} \right\}$ are LI, then their {\it linear span} generates a {\it subspace} of {\it dimension} m. 
    \item[(iii)] For any $m < n$, any set of $m$ LI vectors may be extended to form a basis for $\Zn$.
    \item[(iv)] If the columns of an $n \times m$ { \it matrix} are LI, then viewed as a {\it linear map} from $\Zn \rightarrow \Z^m$, its {\it range} is of {\it dimension} $m$ and {\it kernel} is of {\it dimension} $n-m$.
\end{enumerate}
\end{cor}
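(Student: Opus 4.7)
The plan is to prove each of the four claims using two main tools: the Chinese remainder theorem, which reduces everything to the prime-power case, and elementary counting via the first isomorphism theorem for abelian groups. Throughout I interpret the ``dimension'' of a submodule $M \subseteq \Zn$ as $\log_d |M|$, which will be an integer for every submodule we encounter.

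First I would dispatch (ii). Given LI vectors $\vect{a_1}, \ldots, \vect{a_m}$, consider the map $\phi: \Z^m \to \Zn$ given by $\phi(x_1, \ldots, x_m) = \sum_i x_i \vect{a_i}$. Its kernel is trivial by Definition \ref{defn:LI}, so $\phi$ is an injective module homomorphism and its image, the linear span of the $\vect{a_i}$, is isomorphic to $\Z^m$ as an abelian group. Hence the span has exactly $d^m$ elements and dimension $m$. Claim (iv) then follows by applying the first isomorphism theorem to the induced linear map and computing cardinalities: with the range of dimension $m$, the kernel necessarily has cardinality $d^{n-m}$, i.e., dimension $n-m$.

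For (i) and (iii) I would reduce to the prime-power case via CRT. Writing $d = \prod_{j=1}^r p_j^{m_j}$, there is a ring-and-module isomorphism $\Zn \cong \prod_{j=1}^r \mathbb{Z}_{p_j^{m_j}}^n$ under which linear independence in $\Zn$ is equivalent to linear independence of each componentwise projection in $\mathbb{Z}_{p_j^{m_j}}^n$. In the prime-power case $\mathbb{Z}_{p^m}^n$, the crucial observation is that any LI vector $\vect{a}$ must have at least one coordinate coprime to $p$; otherwise every coordinate lies in the maximal ideal $(p)$, giving $p^{m-1}\vect{a} = \vect{0}$ with $p^{m-1} \neq 0$, which contradicts LI. Taking such a unit coordinate as a pivot, elementary invertible column and row operations reduce $\vect{a}$ to a standard basis vector. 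Iterating on the remaining coordinates, any LI set of size $m$ in $\mathbb{Z}_{p^m}^n$ can be extended to a basis, establishing (iii) in the prime-power case; CRT then assembles (iii) in general by extending each factor independently and recombining. Finally (i) follows from (ii): $n$ LI vectors span a submodule of cardinality $d^n$, which must coincide with $\Zn$.

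The main obstacle is the extension step in (iii), since over $\mathbb{Z}_{p^m}$ one cannot apply textbook Gaussian elimination: nonzero non-units behave like zero for the purpose of pivoting. The remedy is the unit-coordinate criterion for LI vectors above, which restores the pivot step and thus the inductive extension. Once (iii) is available for prime powers, the CRT glues the factors back together without further obstruction, yielding the results for arbitrary $d$.
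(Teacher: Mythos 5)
Your proposal is correct, and for the substantive part of the corollary --- the extension step (iii) --- it takes a genuinely different route from the paper. The paper works directly over $\Z$: it shows that a single LI vector $\vect{a}$ satisfies $\mathrm{g.c.d.}\{a_1,\dots,a_n,d\}=1$ and then builds an invertible matrix containing $\vect{a}$ as a column by repeated applications of B\'ezout's identity (an argument adapted from Newman's \emph{Integral Matrices}), followed by an induction on $m$ that conjugates the first $m$ vectors to standard basis vectors. Your route instead factors $\Zn\cong\prod_j\mathbb{Z}_{p_j^{m_j}}^n$ by the Chinese remainder theorem (checking, correctly, that LI is preserved in both directions under the componentwise projections), and then exploits the fact that $\mathbb{Z}_{p^m}$ is a local ring: an LI vector must have a unit coordinate, since otherwise $p^{m-1}$ annihilates it, and a unit pivot restores ordinary Gaussian elimination. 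This is shorter and conceptually cleaner --- the delicate g.c.d.\ bookkeeping in the paper's $m=1$ case disappears entirely --- and it reuses a tool (CRT) the paper already deploys for Theorem~\ref{thm:counting_general_d}. What the paper's argument buys in exchange is that it is self-contained over $\Z$ and fully explicit about the extending matrix, which is what Appendix~\ref{appendix:symplectic_realization} actually consumes. Your treatments of (i) and (ii) coincide with the paper's (injectivity of $(x_1,\dots,x_m)\mapsto\sum_i x_i\vect{a_i}$, then a cardinality count).

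One caveat on (iv): you define dimension as $\log_d|M|$, whereas the paper's Appendix~\ref{subsec:app2} defines an $m$-dimensional subspace as one \emph{isomorphic} to $\Z^m$; these notions differ for general submodules (e.g.\ $2\mathbb{Z}_4\times 2\mathbb{Z}_4\subset\mathbb{Z}_4^2$ has $d^1$ elements but is not cyclic of order $4$). Your first-isomorphism-theorem count therefore establishes only the cardinality of the kernel, not that it is free of rank $n-m$; to get the latter you should route (iv) through (iii), extending the $m$ LI columns to an invertible $n\times n$ matrix and reading off the kernel as the image of $\{0\}^m\times\Z^{n-m}$ under the resulting isomorphism. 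Since the paper omits a proof of (iv) altogether, this is a refinement rather than a disagreement, but it is worth making explicit.
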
 Despite the fact that $\Zn$ is not necessarily a vector space, the following terms in Corollary ~\ref{Cor:LI}:  {\it basis}, {\it linear span}, {\it subspace}, {\it dimension}, {\it matrix}, {\it linear map}, {\it range} and {\it  kernel} can be applied and used in the same manner as done when working with a vector space over fields. The reader is further directed to the Appendix ~ \ref{app:module} for a rigorous justification of these terms.
\subsection{Generalised Pauli group on n qudits}
\label{subsec:Pauli}
We introduce the single qudit Pauli group $\Ps$ using its defining representation, which acts on $\C^d$, where $d$ is an arbitrary positive integer greater than or equal to $2$. The construction of the defining representation of $\Ps$ for arbitrary $d$ has been done earlier in many works
 \cite{Weyl1932,Schwinger1960,Appleby2005,Hostens2005,Gross2006, Bullock2007, Bombin2007, Appleby2012, Gheorghiu2014,Bengtsson2017}. Among these, we choose $\cite{Hostens2005}$. First, let us introduce an orthonormal basis for $\C^d$: $\left\{ \ket{j} \right\}_{j=0}^{d-1}$, where the label $j$ is taken from $\Z$. With respect to this basis, define the linear operators $X$ and $Z$ as follows. %In this section, we introduce the generalised Pauli operators on $n$ qudits with arbitrary dimension.  The generalised Pauli operators are discussed in Ref.\cite{Hostens2005}.  Let $d$ be the dimension of the Hilbert space for a qudit. We define generalized Pauli operators $\X$ and $\Z$ as follows\tanmay{:}
\begin{equation}
\label{eq:GeneralizedPauli}
\begin{aligned}
& \X \ket{j} = \ket{j+1}, \\
& \ZZ \ket{j} = \omega^{j}\ket{j}, \ \forall \ j \in \Z,
\end{aligned}
\end{equation}
where $\omega \coloneqq \exp{2 \pi i / d}$, ($d$-th root of unity). Since $j \in \Z$, it is understood that the operation of addition in $j+1$ is performed in $\Z$. The commutation relations between $X$ and $Z$ are given by
\begin{equation}
    \label{eq:commutation1}
    Z X \ = \ \omega X Z. 
\end{equation} When $d=2$, $X=\begin{psmallmatrix} 0 & 1 \\ 1 & 0 \end{psmallmatrix}$, and $Z=\begin{psmallmatrix} 1 & 0 \\ 0 & -1 \end{psmallmatrix}$, that is, we get the well-known Pauli matrices for the qubit case. 
 When $d$ is odd, $\Ps \coloneqq \left\langle X , Z \right\rangle$. It is easily seen that an arbitrary element takes the form
$\omega^j X^a Z^b$, where $j,a,b = 0, 1, \cdots, d-1$. Group composition is given by
\begin{equation}
    \label{eq:single_qudit_group_composition1}
  \left(  \omega^j X^a Z^b  \right) \ . \left(  \omega^{j'} X^{a'} Z^{b'} \right) \ = \ \omega^{j+j'+ a'b} X^{a+a'} Z^{b+b'},
\end{equation} which is the just the composition rule of the Heisenberg-Weyl group \cite{Gross2006}. The order of each element is at most $d$. For the case when $d$ is even, often it is convenient\footnote{For even $d$, the group $\left\langle X ,Z \right\rangle$ has some undesirable properties: while $X^d= Z^d=\unity$, $\left( \omega^a XZ \right)^d= - \unity$ for all $a = 0,1,\cdots, d-1$, the order of $\omega^a X Z $ is $2d$. It is cumbersome to keep track of which operator has order $d$ and which has order $2d$. By adding $\zeta$ to the group, we can obtain another group element $\zeta X Z$ whose order is $d$. This is preferable to us because, while $\omega^a X Z \in \left\langle \zeta \unity, X Z \right\rangle$ \tanmayc{continues to have order $2d$}, we can always multiply this with $\zeta$ to get an order $d$ element. The important point is that elements like $X$ and $ \zeta X Z$ are put on the same footing. See \cite{Chaturvedi2010} for a treatment of the even $d$ case without the redundancy $\zeta$ phase factor.} to introduce an additional phase factor $\zeta =\omega^{\frac{1}{2}}$, and $\Ps \coloneqq \left\langle \zeta \unity,  X , Z \right\rangle$. Then an arbitrary element in $\Ps$ is $\zeta^j X^{a} Z^{b}$, where $j \in \mathbb{Z}_{2d}$, $a,b \in \Z$. The group composition law of two arbitrary elements is given by 
\begin{equation}
    \label{eq:single_qudit_group_composition2}
   \left(  \zeta^j X^a Z^b  \right) \ .  \left(  \zeta^{j'} X^{a'}Z^{b'} \right) \ = \ \zeta^{j + j' + 2 a' b} X^{a+a'} Z^{b+b'}.
\end{equation} While the order of all group elements is at most $2d$ in the even $d$ case, one may suitably multiply with $\zeta$ to generate another group element with order at most $d$. %\tanmayc{Consider where it's even necessary to separate out the odd and the even cases for the single qudit Pauli groups.} \newline
The $n$-qudit Pauli group $\Pn$ is simply the $n$-fold tensor product of the single qudit Pauli group. The Hilbert space it acts on will be denoted by $\mathcal{H}\simeq \left(\C^d\right)^{\otimes n}$. For the odd $d$ case, $\Pn \ \coloneqq \  \left\langle X_j, Z_j \right\rangle_{j=1}^n$, whereas for the even $d$ case, we have that $\Pn \ = \left\langle \ \zeta \unity, X_j, Z_j \right\rangle_{j=1}^n$, where $X_j$ represents the operator with $X$ on the $j$-th qudit and $\unity$ on all the remaining qudits.   Neglecting the phase factors of $\omega^j$ and $\zeta^j$, we may represent an arbitrary element of the $n$-qudit Pauli group as 
\begin{equation}
    \label{eq:arbitrary_n_qudit_pauli_group}
    \g(\vect{a}) = X^{u_{1}}Z^{v_{1}}  \cdots X^{u_{n}}Z^{v_{n}},  
\end{equation} where  $\vect{a} \coloneqq (u, v)^T \in \Znn$ is a $2n$-ordered tuple with entries in $\Z$. It is easily seen that group composition law is given by %\sout{The ket state $\ket{j}$ is the $j$-th eigenstate in the Hilbert space. Addition in the ket is carried out modulo $d$. The generalised Pauli operators $\g(\vect{a})$ for $n$ qudit space is obtained by the tensor products of the generalised Pauli operators given in Eq. (\ref{eq:GeneralizedPauli}):
%\begin{equation}
%\label{eq:nquditpauli}
%\begin{aligned}
%\g(\vect{a}) = X^{u_{1}}Z^{v_{1}} \otimes ... \otimes X^{u_{n}}Z^{v_{n}}.  
%\end{aligned}
%\end{equation}
%where $\vect{a} \coloneqq (u, v)^T \in (\mathbb{Z}_{d}^{n}, \mathbb{Z}_{d}^{n})^T$ is a $2n$-tuple column labelled the power of Pauli operators for each qudit. Eq. (\ref{eq:nquditpauli}) implies that there are $d^{2n}$ distinct Pauli operators in the Pauli group $\mathcal{P}_n$ for a $d$ dimensional $n$-qudit system. The product of two Pauli operators is as follows} 
\cite{Hostens2005}
\begin{equation}
\label{eq:pauliproduct}
\g(\vect{a})\g(\vect{b}) = \omega^{\vect{a}^{T}U \vect{b}}\g(\vect{a}+\vect{b}),
\end{equation}
where
\begin{equation}
U = \left[\begin{matrix}0&0\\I_n&0\end{matrix}\right] ,%\: \text{mod} \: d,
\end{equation}
and $I_n$ is the $n \times n$ identity matrix.  From Eq. (\ref{eq:pauliproduct}), the commutation relation for the Pauli operators is
\begin{equation}
\label{eq:commutation2}
\g(\vect{a})\g(\vect{b}) = \omega^{-\vect{a}^{T}\L \vect{b}}\g(\vect{b})\g(\vect{a}),
\end{equation}
where
\begin{equation}
\label{eq:Lambda}
\L \coloneqq \left[\begin{matrix}0&I_n\\-I_n&0\end{matrix}\right].% \: \text{mod} \: d.
\end{equation}
Eq. (\ref{eq:commutation2}) implies that 
\begin{equation}
\vect{a}^{T}\L \vect{b} = 0
\end{equation}
if and only if the  two Pauli operators $\g(\vect{a})$ and $\g(\vect{b})$ commute with each other. 
The centre of the Pauli group $\Pn$, which is the subgroup of $\Pn$ which commutes with all elements of $\Pn$, is 
\tanmayc{
\begin{align}
    \label{eq:centre_Pn}
     \mathcal{Z}\left( \Pn \right) \ = \ \begin{cases}  \left\langle \omega \unity \right\rangle, \ \mathrm{when} \ d \ \mathrm{is} \ \mathrm{odd}, \\ \left\langle \zeta \unity \right\rangle, \ \mathrm{when} \ d \ \mathrm{is} \ \mathrm{even}. \\   \end{cases}
\end{align}} Thus the factor group $\Pn / \mathcal{Z}\left(\Pn \right) \simeq \Znn$. This is also seen from Eq. \eqref{eq:pauliproduct}, since $\g(\vect{a})\g(\vect{b}) \propto \g(\vect{a+b})$. Thus $\Pn$ is homomorphic to $\Znn$, and one possible homomorphism takes Pauli $\g(\vect{a}) \in \Pn$ to $\vect{a} \in \Znn$. This homomorphism plays a very important role in the formalism of the stabilizer codes. \newline 
The Clifford group $\Cl$ for an $n$-qudit system is defined as the normaliser of the Weyl-Heisenberg group in the unitary group over the $n$-qudit system. \begin{equation}
    \label{eq:Cliff} 
    U \in \Cl \ \Leftrightarrow \ U g U^\dag \ \in \Pn, \ \forall \ g \in \Pn.
    \end{equation}
The Clifford group is homomorphic to the group of $2n \times 2n$ symplectic matrices $\Sp$ over $\Z$. 
\begin{equation}
\label{eq:M_symp}
M \in \Sp \ \Leftrightarrow \ M^T \L M = \L,  
\end{equation} where $\L$ is given in Eq. \eqref{eq:Lambda}.

\subsection{Stabilizer codes on n qudits}
\label{subsec:stabilizer}
The general framework of stabilizer codes was introduced in \cite{Gottesman97}. Among the many references available in the literature, we refer the reader to \cite{Nielsen}, for a beginner friendly introduction to quantum error correction. \newline 
To define an $[[n,k]]_d$ stabilizer code space, we will need to  first define a stabilizer group $S$. Let $g_1, g_2, \cdots, g_{n-k}$ be $n-k$ elements in $\Pn$, with the following properties. (i) they commute with each other, (ii) the order of $S\coloneqq \left\langle g_1,g_2, \cdots, g_{n-k} \right\rangle$ is $|S| = d^{n-k}$. The final condition ensures that any non-trivial product of the $g_j$'s, i.e., $g_1^{x_1}g_2^{x_2}\cdots g_{n-k}^{x_{n-k}}=\unity$ if and only if $x_1,x_2,\cdots, x_{n-k}=0$. Often, one includes another condition, i.e., spectrum of each $g_j$ always contains $+1$. If any $g_j$ doesn't satisfy this condition, one may replace $g_j$ with \tanmayc{$\omega^{a}g_j$ (or $\zeta^{a}g_j$, as appropriate)}, so that $\omega^{a}g_j$ \tanmayc{(or $\zeta^{a}g_j$)} has eigenvalues $+1$. One may associate to $S$, a subspace of the Hilbert space $\CS$, which is defined as 
\begin{equation}
\label{eq:CS}
\mathcal{C}(S)\coloneqq\left\{\ket{\psi} \in \left(\mathbb{C}^d\right)^{\otimes n}\mid g\ket{\psi} =\ket{\psi} \text{ for all }g\in S\right\}. \end{equation} $\dim \CS =  d^{k}$ (see Theorem 1, in \cite{Gheorghiu2014}). It is easily seen that $\CS$ is the unique subspace {\it{stabilized}} by  $S$, and is hence called the stabilizer code space corresponding to $S$. Since $\dim \CS = d^{k}$, $\CS$ encodes $k$ qudits within itself, and thus $\CS$ is said to be an $[[n,k]]_d$ stabilizer code space, where the subscript denotes the dimension $d$ of a single qudit \tanmayc{(see Remark ~\ref{rem:distance})}. 
\begin{rem}\label{rem:stabilizer_state} When $k=0$, $\dim \CS=1$. Since $\CS$ is spanned by a single vector, we refer to that vector as a stabilizer state. \end{rem} \begin{rem}\label{rem:no_encoding} The case when $k=n$ covers the scenario when we're encoding the whole Hilbert space into itself. We are not interested in this scenario.  \end{rem} 
\tanmayc{Note that $g_j = \g(\vect{a_j})$ for some $\vect{a_j} \in \Znn$. Given $\vect{a}_j$'s instead of $g_j$'s, one may construct the $g_j$'s upto an overall phase factor, from the $\vect{a}_j$'s using Eq. \eqref{eq:arbitrary_n_qudit_pauli_group}. For now, we ignore the loss of this phase factor in going from $g_j$ to $\vect{a}_j$, and identify the subgroup $S$ in terms of  the $\vect{a_j}$'s instead of the $g_j$'s \cite{Nielsen}. This is done by arranging the $\vect{a_j}$'s in a $2n \times (n-k)$ matrix, which is called the check matrix of $S$. The role played by these phase factors in constructing $S$ from $\left\{ \vect{a}_j \right\}_{j=1}^{n-k}$ is explained after Eq. \eqref{eq:M_nequalk} below.}  \begin{equation}
\label{eq:checkmatrix}
H = 
\bigg[\begin{array}{c|c|c|c}
\vect{a_1} &\vect{a_2} &\cdots &\vect{a_{n-k}}
\end{array}\bigg].
\end{equation} The aforementioned two conditions on $g_j$'s may be phrased in terms of equivalent conditions on $\vect{a_j}$'s. 
\begin{align}
    \label{eq:condition1}
    g_ig_j = g_j g_i \ \Leftrightarrow \ \vect{a_j}^T \L \vect{a_i} = 0. \\
    \label{eq:condition2}
    |S| = d^{n-k} \ \Leftrightarrow \  \left| \mathrm{span} \left\{ a_j \right\}_{j=1}^{n-k} \right| \ = d^{n-k}. 
\end{align} Condition Eq. \eqref{eq:condition1} follows from Eq. \eqref{eq:commutation2}, and  Condition Eq. \eqref{eq:condition2} is equivalent to the fact that $a_j$'s are linearly independent and span an $(n-k)$-dimensional subspace. 

The $2n \times (n-k)$ matrix $H$ can be extended to a full $2n \times 2n$ symplectic matrix \tanmayc{$S$}. A partial construction of this full $2n \times 2n$ matrix can be found in Ref. \cite{Nielsen}. We refer the reader to Appendix \ref{appendix:symplectic_realization} for such a construction. \newline 
\begin{equation}
    \label{eq:M_matrix}
    \tanmayc{S}  \ = \ \bigg[\begin{array}{c|c|c|c}
E   & L_X   &  H & L_Z  
\end{array}\bigg],
\end{equation}
where $E$ is a $2n \times (n- k)$ matrix, and $L_X$ and $L_Z$ are $2n \times k$ matrices. The columns of $L_X$ and $L_Z$ correspond to the logical $X$ and the logical $Z$ operators respectively, whereas the columns of $E$ may be interpreted as the {\it correctable} Pauli errors. An explanation of this may be found in Section \ref{sec:decomposition_tnkd} in conjunction with Section \ref{app:sec:significance} of the Appendix. When $k=0$, $H$ is a $2n \times n$ matrix, and $M$ takes the following form.  
\begin{equation}
    \label{eq:M_nequalk}
    \tanmayc{S}  \ = \ \bigg[\begin{array}{c|c}
E   &  H  
\end{array},\bigg]
\end{equation} where $E$ is also a $2n \times n$ matrix. 

When we alter the generators $g_j$ of $S$ as $g_j \rightarrow \omega^{a_j} g_j$, for $a_j \in \Z$, then if $a_j \neq 0$ for some $j$, $\omega^{a_j} g_j$ are the generators of another stabilizer group $S'$. By replacing $S$ with $S'$ in Eq. \eqref{eq:CS}, one can associate to $S'$ a unique corresponding $[[n,k]]_d$ stabilizer code space \tanmayc{$\mathcal{C}\left(S'\right)$}. $\CS$ and $\mathcal{C}\left(S'\right)$ are orthogonal because the $+1$ eigenspaces of $g_j$ and $\omega^{a_j} g_j$ are orthogonal  when $a_j \neq 0$. There are $d^{n-k}$ possible choices for $a_j$ (including the choice $a_j=0$ for all $j$), thus we get $d^{n-k}$ orthogonal code spaces. Since each of these code spaces is of dimension $d^k$, the direct sum of all these code spaces is the full Hilbert space. Since $\vect{a_j}$ do not encode the overall phase factor of the Paulis, the entries in the check-matrix don't change when the phase factors $\omega^{a_j}$ are changed in the generators. So the check matrix $H$ and its symplectic extension \tanmayc{$S$} for all the above codes spaces are the same. We will refer to a \emph{code} as the aforementioned collection of code spaces without meaning to distinguish among them. Thus a code is associated to a check matrix without any ambiguity. We will refer to a \emph{code space} $\CS$ as the subspace of the Hilbert space ${\C^d}^{\otimes n}$ associated to a unique stabilizer group $S$. This is as per Eq. \eqref{eq:CS}. %\begin{rem}

\section{Group structure of $[[n,k]]_d$ stabilizer codes}
\label{sec:groupstructure}
In this section, we will obtain a group theoretic structure underpinning all $[[n,k]]_d$ stabilizer codes. 
Let $H_1$ be the check matrix of some $[[n,k]]_d$ stabilizer code, and $A$ be an $(n-k)\times(n-k)$ invertible matrix. Then $H_1A$ is the check-matrix of the same stabilizer code. This is because the columns of $H_1A$ are merely (invertible) linear combinations of the columns of $H_1$, i.e., the generators corresponding to the column vectors of $H_1$ may be recovered from the generators corresponding to the column vectors of $H_2$. Define $H_2=H_1A$. Let \tanmayc{$S_1$} and \tanmayc{$S_2$} be a $2n \times 2n$ symplectic matrix obtained by extending $H_1$ and $H_2$ respectively (see Section \ref{appendix:symplectic_realization} of the Appendix). Since $\tanmayc{S_1, S_2} \in \Sp$, there exists a matrix $M$ such that
\begin{equation}
\label{eq:M_equation}
\tanmayc{S_2} \ = \ \tanmayc{S_1} \ M. 
\end{equation}
We first assume that $k \ge 1$. Then it is clear that $M$ needs to have the following form so that $H_2 = H_1 A$.
\begin{equation}
    \label{eq:M_matrix_block}
    M=\left[\begin{matrix}
M_{11} & M_{12} & 0 & M_{14}\\
M_{21} & M_{22} & 0 & M_{24}\\
M_{31} & M_{32} & A & M_{34}\\
M_{41} & M_{42} & 0 & M_{44}
\end{matrix}\right]
\end{equation}
Here $M_{11}$, $M_{31}$, $A$ are $(n-k)\times (n-k)$ matrices, $M_{22}$, $M_{24}$, $M_{42}$ and $M_{44}$ are $k \times k$ sized matrices, $M_{12}$, $M_{14}$, $M_{32}$, $M_{34}$ are $(n-k) \times k$ and $M_{21}$, $M_{41}$ are $k \times (n-k)$ matrices. Additionally $M$ is necessarily a symplectic matrix, since $M=\tanmayc{S_1^{-1}S_2}$, and $\tanmayc{S_1}$, $\tanmayc{S_2}$ are symplectic. Thus $M$ satisfies the symplectic condition Eq. \eqref{eq:M_symp}. 
%\begin{equation}
%\label{eq:M_symp}
%M^T\Lambda_{2n}M=\Lambda_{2n}.
%\end{equation}
Substituting $M$ in \tanmayc{Eq. \eqref{eq:M_matrix_block}} into Eq. (\ref{eq:M_symp}), we get the conditions for $M$ to be symplectic:
\begin{align}
&M_{11}=(A^T)^{-1},\label{sycondition0_1}\\
&M_{12}=M_{14}=0_{(n-k)\times k},\label{sycondition0_2}\\
&\left[\begin{matrix}M_{22}&M_{24}\\M_{42}&M_{44}\end{matrix}\right]\in \mathrm{Sp}(2k,\mathbb{Z}_d),\label{sycondition1}\\
& M_{31}A^T - AM_{31}^T \  = \  A \left(  M_{41}^T M_{21}  \ - \  M_{21}^T M_{41} \right) A^T, \label{sycondition2}\\
& M_{32} \ = \ A \left( M_{41}^T M_{22} -M_{21}^T M_{42} \right)  \label{sycondition3}, \\ 
& M_{34} \ = \ A \left( M_{41}^T M_{24} -M_{21}^T M_{44} \right),  \label{sycondition4}
%&M_{32}M_{44}^T-AM_{41}^T-M_{34}M_{42}^T=0\label{sycondition2}\\
%&-M_{32}M_{24}^T+AM_{21}^T+M_{34}M_{22}^T=0\label{sycondition3}\\
%&(AM_{31}^T+M_{34}M_{32}^T)^T-(AM_{31}^T+M_{34}M_{32}^T)=0\label{sycondition4}
\end{align}
where $\Spnd{2k}{d}$ denotes the group of $2k \times 2k$ symplectic matrices over $\mathbb{Z}_d$. Using Eq. \eqref{sycondition0_1} and Eq. \eqref{sycondition0_2}, $M$ takes the following form
\begin{equation}
\label{eq:M_form}
M = 
\left[\begin{matrix}
(A^T)^{-1}&0&0&0\\
M_{21}&M_{22}&0&M_{24}\\
M_{31}&M_{32}&A&M_{34}\\
M_{41}&M_{42}&0&M_{44}
\end{matrix}\right].
\end{equation} \noindent The derivation of Eq. (\ref{sycondition1})-(\ref{sycondition4}) is not essential for this work, and is hence shifted to the Appendix \ref{appendix:M_sycondition}. \newline
Note that when $k=0$, $M$ in Eq. \eqref{eq:M_form} takes the form
\begin{equation}
    \label{eq:kis0_M_form}
       M=\left[\begin{matrix}
\left(A^T\right)^{-1} & 0 \\
M_{31} & A 
\end{matrix}\right].
\end{equation}

\begin{thm}\label{tnk}
The set of all $2n\times 2n$ symplectic matrices of the form given in Eq. (\ref{eq:M_form}) form a subgroup of $\Spnd{2n}{d}$.
\end{thm}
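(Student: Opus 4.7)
The plan is to exploit the finiteness of $\Sp$: since the collection in question is a non-empty subset of the finite group $\Sp$, it suffices to verify (a) that it contains the identity and (b) that it is closed under matrix multiplication. Closure under inversion then follows automatically, because every element $N$ of the finite group $\Sp$ has finite order $r$, so $N^{-1} = N^{r-1}$ lies in the set whenever the set is closed under products. This reduces the whole theorem to two elementary checks, neither of which requires revisiting the symplectic conditions \eqref{sycondition2}-\eqref{sycondition4} from scratch.

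For (a), the identity $I_{2n}$ is recovered from the template of Eq.~\eqref{eq:M_form} by choosing $A = I_{n-k}$, taking the middle $2k\times 2k$ symplectic block $\begin{psmallmatrix} M_{22} & M_{24} \\ M_{42} & M_{44} \end{psmallmatrix} = I_{2k}$, and setting all remaining $M_{ij}$ to zero; the conditions \eqref{sycondition2}-\eqref{sycondition4} then hold trivially. For (b), I would rephrase the form of Eq.~\eqref{eq:M_form} as a pair of structural conditions: a symplectic matrix $M \in \Sp$ belongs to the set precisely when its first $(n-k)$ rows are supported on the first $(n-k)$ columns with an invertible block there, and its columns indexed $n+1,\ldots,2n-k$ are supported only on rows indexed $n+1,\ldots,2n-k$. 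Given two such matrices $M, M'$ with parameters $A, A'$, a direct block computation shows that the first block-row of $MM'$ equals $\bigl[(A^T)^{-1}(A'^T)^{-1},\, 0,\, 0,\, 0\bigr] = \bigl[((AA')^T)^{-1},\, 0,\, 0,\, 0\bigr]$, while the middle block-column of $MM'$ equals $\bigl[0,\, 0,\, A A',\, 0\bigr]^T$. Hence $MM'$ again fits the template of Eq.~\eqref{eq:M_form} with parameter $A'' = A A' \in \GLnd{n-k}$, and, being a product of symplectic matrices, it is automatically symplectic, so the remaining conditions \eqref{sycondition1}-\eqref{sycondition4} are inherited for free.

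The case $k = 0$ is handled by the same argument applied to the reduced block form of Eq.~\eqref{eq:kis0_M_form}: the identity again lies in the set, and the block-multiplication collapses to verifying that the $(1,1)$ and $(2,2)$ diagonal blocks of the product compose to $((AA')^T)^{-1}$ and $AA'$ respectively, while the $(2,1)$ block retains the prescribed form. I do not anticipate any real obstacle — the only content of the proof is the structural observation that the prescribed zero-pattern on the first block-row and the third block-column is preserved under composition, which is a one-line block-multiplication check, and the symplectic side-constraints come along for the ride because $\Sp$ is itself a group.
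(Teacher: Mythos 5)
Your proof is correct, and its computational core --- that the block pattern of Eq.~\eqref{eq:M_form} is preserved under matrix multiplication, with the parameter composing as $A''=AA'$, while symplecticity of the product is inherited for free because $\Sp$ is a group --- is exactly the check the paper performs in Eq.~\eqref{eq:subgroup_composition}. Where you genuinely diverge is in the treatment of inverses: the paper uses the one-step subgroup test and therefore must verify that $M^{-1}$ again has the prescribed form, which it does via the explicit formula $M^{-1}=\L^T M^T\L$ and the resulting block expression in Eq.~\eqref{eq:Minverse}; you instead invoke the finite-subgroup criterion (a non-empty, multiplicatively closed subset of a finite group is a subgroup, since $N^{-1}=N^{r-1}$), which is legitimate here because $\Z$ is finite and hence so is $\Sp$. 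Your route is more economical in that it requires no inverse computation at all, at the cost of (i) relying on finiteness, so it would not transfer to symplectic groups over infinite rings, and (ii) not producing the explicit block form of $M^{-1}$, a mildly useful by-product of the paper's argument. Your reformulation of membership as a zero-pattern condition (first block-row supported on the first $n-k$ columns with an invertible block, third block-column supported on rows $n+1,\dots,2n-k$) is also sound: for a symplectic matrix with that zero pattern, the $(1,3)$ block of the condition $M^T\L M=\L$ forces $M_{11}^T A=I$, so the relation between the $(1,1)$ and $(3,3)$ blocks that Eq.~\eqref{eq:M_form} builds in is automatic --- and in any case you verify it directly for the product. The identity check and the $k=0$ reduction to Eq.~\eqref{eq:kis0_M_form} are handled correctly.
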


{\it Proof. } We use the following subgroup test: if $G$ is a group, $H$ a subset of $G$, then $H$ is a subgroup if for all $h,g \in H$, $h^{-1}g \in H$ too. \newline First note that the inverse of a symplectic matrix $M$ is given by $\L^TM^T\L$, since $\L^T M^T \L M = \L^T \L = I$. This gives us 
\begin{equation}\label{eq:Minverse}
M^{-1} \ = \ \left[\begin{matrix}A^T&0&0&0\\
M_{34}^T&M_{44}^T&0&-M_{24}^T\\-M_{31}^T&-M_{41}^T&A^{-1}&M_{21}^T\\-M_{32}^T&-M_{42}^T&0&M_{22}^T\end{matrix}\right].  \end{equation} Thus we see that $M^{-1}$ has the form given by Eq. \eqref{eq:M_form}. Next, consider the composition of two symplectic matrices of the form  Eq. \eqref{eq:M_form}. \begin{widetext}
\begin{align} \label{eq:subgroup_composition} & \left[\begin{matrix}(A^T)^{-1}&0&0&0\\
M_{21}&M_{22}&0&M_{24}\\M_{31}&M_{32}&A&M_{34}\\M_{41}&M_{42}&0&M_{44}\end{matrix}\right] \left[\begin{matrix}(B^T)^{-1}&0&0&0\\
N_{21}&N_{22}&0&N_{24}\\N_{31}&N_{32}&B&N_{34}\\N_{41}&N_{42}&0&M_{44}\end{matrix}\right] \ = \  \left[\begin{matrix}\left((AB)^{T}\right)^{-1}&0&0&0\\
L_{21}&L_{22}&0&L_{24}\\L_{31}&L_{32}&AB&L_{34}\\L_{41}&L_{42}&0&L_{44}\end{matrix}\right]. \end{align} \end{widetext}
Note that the matrix on the RHS of Eq. \eqref{eq:subgroup_composition} is symplectic, since it is a product of two symplectic matrices. Also it is of the form given in Eq. \eqref{eq:M_form}.  \hfill\qed \\
%\begin{rem}
%\label{rem:kis0_subgroup}
%When $k=0$, it is straightforward to see that matrices of the form given by Eq. \eqref{eq:kis0_M_form} form a subgroup of $\Sp$. In fact, it's also straightforward to note that this subgroup is isomorphic to the group of $\GLnd{n}$.
%\end{rem}

The subgroup given in Theorem \ref{tnk} will be denoted by $\Tnkd$. It encapsulates the degree of freedom with which one may extend the check matrix of the $[[n,k]]_d$ trivial code to a symplectic matrix. \begin{rem} \label{rem:tnkd}\tanmayc{The subgroup $\Tnkd$ has appeared in the quantum computing literature earlier. For instance, in \cite{Maslov17}, $\mathrm{T}(2n,0,2)$ (referred to as the Borel subgroup of $\Sp$, denoted as $\mathcal{B}_n$) is employed to construct the Bruhat decomposition for $\Spnd{2n}{2}$. This decomposition is used to give an asymptotically tight parameterization of arbitrary stabilizer circuits. More recently, it appears in \cite{Jansen22}, as the subgroup of the Clifford group, whose action does not change the figures of merit of distillation protocols, which are studied in \cite{Jansen22}. See also Remark~\ref{rem:normal_trivia}.} \end{rem} The trivial code is the code corresponding to the stabilizer group $S= \left\langle Z_1, Z_2 , \cdots, Z_{n-k} \right\rangle$. At the same time, right multiplying the symplectic matrix of any $[[n,k]]_d$ stabilizer code by an element of $\Tnkd$ yields another another symplectic matrix of the same stabilizer code. We saw this in Eq. \eqref{eq:M_equation}. In the next section we show that $\Tnkd$ decomposes as a product of three subgroups, and highlight the roles played by these three subgroups in quantum error correction for stabilizer codes. 
\section{Decomposition of $\Tnkd$}
\label{sec:decomposition_tnkd}
The matrix $M$ given in Eq. \eqref{eq:M_form} can be decomposed as a product of \tanmayc{four} matrices. 
\begin{equation}
 \label{eq:M_decomposition}
 M \ = \ M_{T} \ M_{E} \ M_S \ M_L, 
\end{equation}
where  \begin{align}
\label{eqM_T}
& M_T \coloneqq \left[ \begin{matrix}
I_{k}&0&0&0\\
0&I_{n-k}&0&0\\
K_{S}& 0&I_{k}&0\\
0&0&0&I_{n-k}
\end{matrix}\right], \\
    \label{eq:M_E}
     & M_E \coloneqq \left[\begin{matrix}
I_{k}&0&0&0\\
N &I_{n-k}&0&0\\
K_{A} & L^T &I_{k}&-N^T\\
L &0&0&I_{n-k}
\end{matrix}\right], \\
\label{eq:M_L}
& M_L \coloneqq \left[\begin{matrix}
I_k&0&0&0\\
0&M_{22}&0&M_{24}\\
0&0&I_{k}&0\\
0&M_{42}&0&M_{44}
\end{matrix}\right] \  \\ 
 \ \mathrm{and} ~ ~ ~ ~ ~ ~ ~ ~ ~  \  &  M_S \coloneqq \left[\begin{matrix}
(A^T)^{-1}&0&0&0\\
0&I_{n-k}&0&0\\
0&0&A&0\\
0&0&0&I_{n-k}
\end{matrix}\right],\end{align} where \begin{align}
    \label{eq:NKL} &  N=M_{21}A^T, \notag \\ & \ K_S = M_{31}A^T + AM_{31}^T, \notag \\ & \ K_A=M_{31}A^T - AM_{31}^T, \notag \\  \mathrm{and} \ \  & L=M_{41}A^T.
\end{align} Furthermore, we have the following.
\begin{cor}
\label{eq:cor_decomposition_uniqueness}
For any $M \in \Tnkd$, the decomposition in Eq. \eqref{eq:M_decomposition} is unique, i.e., there are unique matrices $M_S$, $M_L$, $M_E$ and $M_T$ such that Eq. \eqref{eq:M_decomposition} holds. 
\end{cor}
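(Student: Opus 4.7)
The plan is to prove uniqueness by reading off each of the four factors from the blocks of $M$ in its natural $4\times 4$ block decomposition with block sizes $(k, n-k, k, n-k)$, matching the structure of Eq.~\eqref{eq:M_form}. The approach is a direct inversion of the construction: I will compute the block-wise product $M_T M_E M_S M_L$ and show that each parameter appearing in the four factors is uniquely pinned down by a specific block (or block-combination) of $M$.

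First, the $(1,1)$ block of the product $M_T M_E M_S M_L$ equals $(A^T)^{-1}$ (only $M_S$ contributes here), and this matches the $(1,1)$ block of $M$, so $A$, and hence $M_S$, is uniquely determined. Next, the $2k\times 2k$ submatrix of the product carved out of rows and columns indexed by block positions $2$ and $4$ receives a trivial contribution from the other three factors; matching this submatrix against the corresponding block of $M$ fixes the symplectic block of $M_L$, and hence $M_L$ itself. Having fixed $M_S$ and $M_L$, the product $M_T M_E = M (M_S M_L)^{-1}$ is uniquely determined, and reading off its $(2,1)$ and $(4,1)$ blocks recovers $N$ and $L$ via the relations $N = M_{21}A^T$ and $L = M_{41}A^T$ in Eq.~\eqref{eq:NKL}. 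This also uniquely pins down the off-diagonal $(3,2)$ and $(3,4)$ entries of $M_E$, which are forced by its symplectic constraint to be $L^T$ and $-N^T$.

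The only subtle step remaining is to split the $(3,1)$ block of $M_T M_E$ between the contribution $K_S$ from $M_T$ and the contribution $K_A$ from $M_E$. My plan is to argue that the symplectic condition on $M_T$ alone forces $K_S$ to be symmetric (expanding $M_T^T \Lambda M_T = \Lambda$ block-wise gives $K_S = K_S^T$), while the symplectic condition on $M_E$ alone, together with the already-fixed values of $N$ and $L$, forces $K_A$ to be an antisymmetric matrix of a specific form (namely the one given by Eq.~\eqref{eq:NKL}). The defining formulas $K_S = M_{31}A^T + A M_{31}^T$ and $K_A = M_{31}A^T - A M_{31}^T$ manifestly produce a symmetric and an antisymmetric matrix, respectively, and are expressed as direct linear combinations of the blocks $M_{31}$ and $A$ of $M$, with no division required.

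The main obstacle I anticipate is formally justifying the symmetric/antisymmetric separation step over the ring $\Z$ (as opposed to a field of characteristic not equal to $2$), since the generic argument relies on dividing by $2$. The explicit linear formulas in Eq.~\eqref{eq:NKL}, which determine $K_S$ and $K_A$ directly from $M_{31}$ and $A$, will be crucial to circumvent this issue: they show that the extraction is literally a linear read-off from $M$, independent of any characteristic-2 concerns. Once this step is in hand, the identification of all parameters ($A$, the $\mathrm{Sp}(2k, \Z)$ block, $N$, $L$, $K_S$, $K_A$) in terms of unique block-combinations of $M$ shows that the four factors $M_T$, $M_E$, $M_S$, $M_L$ are each uniquely determined by $M$, proving the corollary.
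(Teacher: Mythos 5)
Your proposal is correct and follows essentially the same route as the paper's own proof: read $M_S$ and $M_L$ off the blocks of $M$, form $M_T M_E = M\left(M_S M_L\right)^{-1}$, extract $N$ and $L$ from its $(2,1)$ and $(4,1)$ blocks, and use the fact that $K_A$ is pinned down by $N$ and $L$ via Eq.~\eqref{eq:NKL} to separate $M_E$ from $M_T$. The only difference is that you explicitly flag the characteristic-$2$ subtlety in splitting the $(3,1)$ block, which the paper's proof passes over silently but resolves in the same way, namely by letting the membership condition of $\BA$ fix $K_A$ rather than invoking a generic symmetric/antisymmetric decomposition.
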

\begin{proof}
Since $M_S$ and $M_L$ are constructed from some of the blocks of $M$, there is a unique way of obtaining them. Having obtained $M_S$ and $M_L$, $M_T M_E = M \left( M_L M_S \right)^{-1}$. Since $N$ and $L$ are matrix blocks within $M \left( M_L M_S \right)^{-1}$, and $K_A$ is entirely determined by $N$ and $L$), a unique $M_E$ is obtained from $M \left( M_L M_S \right)^{-1}$, which also gives us a unique $M_T$. Hence proved.
\end{proof} ~ \newline We next note that four subsets of matrices of $M_S$, $M_L$, $M_E$ and $M_T$ form four distinct subgroups, \tanmayc{and the intersection of any pair of these subgroups is $\left\{\unity \right\}$}. \newline 
\paragraph{The $\GLnd{n-k}$ subgroup:} The set of matrices of the form $M_S$, with $A$ invertible, is readily seen to be a subgroup of $\Tnkd$ (by setting all matrix elements in $M_{21}$, $M_{24}$, $M_{31}$, $M_{32}$, $M_{34}$, $M_{41}$ and $M_{42}$ as $0$, and setting $M_{22}=M_{44}=I_{n-k}$). Moreover, this subgroup is isomorphic to $\GLnd{n-k}$, the group of $(n-k)\times(n-k)$ general linear matrices over $\Z$, and hence we refer to it as the $\GLnd{n-k}$ subgroup. 
\paragraph{ The $\Spnd{2k}{d}$ subgroup:} All matrices of the form $M_L$ which satisfy Eq. \eqref{sycondition1} form a subgroup of $\Tnkd$, and this subgroup is isomorphic to $\Spnd{2k}{d}$. Thus we call it the \tanmayc{$\Spnd{2k}{d}$} subgroup. \newline
Note that the $\GLnd{n-k}$ subgroup and the \tanmayc{$\Spnd{2k}{d}$} subgroup commute with each other. \newline
\paragraph{The symmetric and antisymmetric abelian subgroups $\BS$ and $\BA$:} Matrices of the form $M_T$ and $M_E$, which satisfy Eq. \eqref{sycondition2} (see Eq. \eqref{eq:NKL}), are also subgroups of $\Tnkd$. For $M_T$ to satisfy Eq. \eqref{sycondition2}, $K_S$ has to be symmetric but is otherwise unconstrained, while for $M_E$ to satisfy Eq. \eqref{sycondition2}, $K_A$ has to be anti-symmetric and moreover satisfies the equation $K_A = N^TL - L^TN$ (see Eq. \eqref{eq:NKL}).  It is easily verified that both are abelian subgroups, and also commute with each other. We refer to them as the symmetric and anti-symmetric abelian subgroups $\BS$, $\BA$, 
\begin{align}
    \label{eq:subgroup_BS}
    & \BS \ = \ \left\{ \ \mathrm{all} \  M_T \in \Tnkd \right\}. \\ 
    \label{eq:subgroup_BA}
    & \BA \ = \ \left\{ \ \mathrm{all} \  M_E \in \Tnkd \right\}.
\end{align}
\begin{rem}
\label{rem:kis0_decomposition}
When $k=0$, the $\BA$ and $\Spnd{2k}{d}$ subgroups shrink to the trivial group which contains only the identity. The decomposition of $\Tnkd$ then takes the following form. \begin{equation} \label{eq:kis0_tnkd_decomposition} 
M  \ = \ M_T \ M_S.  
\end{equation} The uniqueness of the decomposition (i.e., Corollary~ \ref{eq:cor_decomposition_uniqueness}) holds for this case well.
\end{rem} \tanmayc{\begin{rem}\label{rem:normal_trivia} The group $\Tnkd$ is often referred to in the mathematics literature as the parabolic subgroup associated to a $k$-dimensional isotropic subspace of $\Znn$ \cite{Hanlon2005}. Furthermore, there is more structure to it than we mentioned in its decomposition above (see Remark \ref{rem:tnkd_structure}). \end{rem} \begin{rem} \label{rem:tnkd_structure} We borrow this observation from \cite{Hanlon2005}. Let us denote by $B$ the group generated by the union of $\BS$ and $\BA$, i.e., $N \coloneqq \left\langle \BS \cup \BA \right\rangle$. Then it is easily verified that $N$ is a normal subgroup of $\Tnkd$, and that $\Tnkd/ N \simeq \GL{n-k}{d} \times \Spnd{2k}{d}$. \end{rem} } In the Appendix Section \ref{app:sec:significance}, we elaborate on the significance of these subgroups in quantum error correction. While the roles played by $\GLnd{n-k}$ and $\Spnd{2k}{\Z}$ is easily surmised to those familiar with QEC with stabilizer codes, there has been less spotlight on the roles played by $\BA$ and $\BS$, i.e., the fact that the choice of selecting correctable errors for the code is encapsulated by the \emph{group actions} of $\BS$ and $\BA$. 
\section{Counting $[[n,k]]_d$ stabilizer codes}
\label{sec:cardinality_nkd}
One may employ Lagrange's theorem in theory of finite groups to compute the total number of $[[n,k]]_d$ stabilizer codes, which we denote as $\nStab{n}{k}{d}$.
\begin{lemma}\label{lem:count}
The number of $[[n,k]]_d$ stabilizer codes  is 
\begin{equation}
    \label{eq:nkd_stab_code_count}
    \nStab{n}{k}{d} \ = \ \left| \Spnd{2n}{d} \right|/ \left| \Tnkd \right|,
\end{equation} where $|\Spnd{2n}{d}|$ is the order of $\Spnd{2n}{d}$ and $|\Tnkd|$ is the order of $\Tnkd$.
%\begin{equation}
%\label{eq:nStab}
%\nStab{n}{k}{d} = \frac{|\Spnd{2n}{d}|}{|\Tnkd|},
%\end{equation}
\end{lemma}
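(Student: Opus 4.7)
I plan to exhibit a bijection between the set of $[[n,k]]_d$ stabiliser codes and the left coset space $\Spnd{2n}{d}/\Tnkd$, after which Lagrange's theorem delivers the claim directly.

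First, I would define a map $\phi:\Spnd{2n}{d} \to \{[[n,k]]_d \text{ codes}\}$ that sends each $M$ to the code whose check matrix occupies the $H$-block columns of $M$ in the partition of Eq.~\eqref{eq:M_matrix}. Surjectivity of $\phi$ is immediate from Appendix~\ref{appendix:symplectic_realization}, where every valid check matrix is shown to extend to a symplectic matrix.

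The heart of the argument is to identify the fibres of $\phi$ with left cosets of $\Tnkd$ in $\Spnd{2n}{d}$, i.e.\ to show that $\phi(M_1)=\phi(M_2)$ iff $M_1^{-1}M_2 \in \Tnkd$. The forward direction has essentially been set up already in Section~\ref{sec:groupstructure}: if $M_2 = M_1 M$ with $M\in\Tnkd$, then the $H$-column of $M$ has the form $(0,0,A,0)^T$ for an invertible $A$, so $H_2 = H_1 A$, meaning the two check matrices are related by an invertible change of basis and therefore generate the same stabiliser subgroup of $\Znn$ and the same code. For the converse, starting from two symplectic extensions $M_1, M_2$ of check matrices of the same code, I would argue that $H_1$ and $H_2$ span the same rank-$(n-k)$ isotropic submodule of $\Znn$, so there exists $A\in\GLnd{n-k}$ with $H_2 = H_1 A$. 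Setting $M := M_1^{-1}M_2 \in \Spnd{2n}{d}$, this forces the $H$-column of $M$ to be $(0,0,A,0)^T$, and the symplectic constraint on $M$ then pins the remaining blocks down to exactly the shape characterised in Theorem~\ref{tnk}. Hence $M\in\Tnkd$, completing the bijection, and Lagrange's theorem yields Eq.~\eqref{eq:nkd_stab_code_count}.

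The main obstacle I anticipate is the converse step, specifically recovering $A\in\GLnd{n-k}$ from the equality of the column spans of $H_1$ and $H_2$ over the ring $\Z$, which is not a field when $d$ is composite. Over a field this step is trivial, but over $\Z$ one has to invoke that the columns of $H_1$ and of $H_2$ each form a genuine \emph{basis}, in the module-theoretic sense legitimised by Corollary~\ref{Cor:LI} and the module appendix, of the same rank-$(n-k)$ submodule; matching two such bases via Definition~\ref{defn:LI} then produces the desired invertible change-of-basis matrix $A$ without secretly using division by non-units.
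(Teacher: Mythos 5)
Your proposal is correct and follows essentially the same route as the paper: both arguments establish a bijection between $[[n,k]]_d$ codes and left cosets of $\Tnkd$ in $\Spnd{2n}{d}$ (codes $\to$ cosets via symplectic extension of the check matrix, cosets $\to$ codes via extraction of the $H$-block) and then invoke Lagrange's theorem. Your handling of the converse step --- recovering an invertible $A \in \GLnd{n-k}$ from two bases of the same rank-$(n-k)$ isotropic submodule over the ring $\Z$, rather than over a field --- is in fact spelled out more explicitly than in the paper's own proof, which treats that redundancy as already established in Section~\ref{sec:groupstructure}.
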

\begin{rem}
\label{rem:d_prime_Gross}
For the case when $d$ is a prime, i.e., $d=p$, the number of $[[n,k]]_d$ stabilizer codes was explicitly computed in Ref. \cite{Gross2006} (see Theorem 20 and Corollary 21 therein). The results we obtain below agree with the results in Ref. \cite{Gross2006} in this case. There is a discrepancy in the language and notation employed: an $[[n,k]]_d$ stabilizer code in our work corresponds to an $m$-dimensional isotropic subspace in Theorem 20 in Ref. \cite{Gross2006}, each $[[n,k]]_d$ code space in our work is a counted as a distinct code in Corollary 21 in Ref. \cite{Gross2006}, and finally, we use the notation $\nStab{n}{k}{d}$ to count the number of $[[n,k]]_d$ stabilizer codes, whereas it is used to count the total number of code spaces in Ref. \cite{Gross2006}.
\end{rem}
\begin{proof} For a given $[[n,k]]_d$ stabilizer code, let's choose a representative code space $\CS$ with the stabilizer group $S=\left\langle g_j \right\rangle_{j=1}^{n-k}$. From $S$ one can construct a check matrix $H$, which may be extended to a symplectic matrix $M$. In Sec. \ref{sec:groupstructure} we noted that this construction of $M$ from $S$ has a redundancy, and that the degree of freedom within the redundancy is captured by the right action on $M$ by the subgroup $\Tnkd$. In other words, each symplectic matrix in any coset of $\Sp/\Tnkd$ is a possible symplectic construction for the same $[[n,k]]_d$ stabilizer code. This tells us that the number of $[[n,k]]_d$ codes is lesser than or equal to the number of left cosets, i.e., $\left| \Sp / \Tnkd \right|$.
%\tanmay{When we alter the generators $g_j$ of $S$ as $g_j \rightarrow \omega^{a_j} g_j$, for $a_j \in \Z$, then if $a_j \neq 0$ for some $j$, $\omega^{a_j} g_j$ are the generators of another stabilizer group $S'$. $S'$ corresponds to an $[[n,k]]_d$ stabilizer code $\mathcal{C}(S')$. $\CS$ and $\mathcal{C}\left(S'\right)$ are orthogonal because the $+1$ eigenspaces of $g_j$ and $\omega^{a_j} g_j$ are orthogonal  when $a_j \neq 0$ (see Eq. \eqref{eq:CS}).  There are $d^{n-k}$ possible choices for $a_j$ (including the choice $a_j=0$ for all $j$), thus we get $d^{n-k}$ orthogonal coding spaces. Since each of these coding space is of dimension $d^k$, the direct sum of all these coding spaces is the full Hilbert space. \newline Each of these coding spaces have the same check-matrix. Thus they correspond to the same distinct left coset of $\Sp/\Tnkd$ as we explained in Sec. \ref{sec:decomposition_tnkd}.
Conversely, let us start from some left coset in $\Sp / \Tnkd$, i.e., $M \times \Tnkd$, where we designate $M$ as the coset representative. One can associate to this coset any one of $d^{n-k}$ mutually orthogonal $[[n,k]]_d$ stabilizer codes in the following way: extract from $M$ a $2n \times (n-k)$ submatrix by extracting the columns with numbers $n+1$ to $2n-k$ from the left (see Eq. \eqref{eq:M_matrix}). Call this submatrix $H$. Since $M \in \Sp$ and satisfies $M^T \L M=\L$, $H$ satisfies $H^T \L   H = 0$. Let the $j$-th column in $H$ be the homomorphic image of some Pauli $\omega^{a_j} g_j$ (see Eq. \eqref{eq:arbitrary_n_qudit_pauli_group}), where $g_j$'s are such that each has a $+1$ eigenspace, and the $a_j$'s are in $\Z$ and arbitrary. Generate the group $S=\left\langle \omega^{a_j} g_j \right\rangle$. Note that the construction branches out into $d^{n-k}$ different choices of $S$, depending on the values of the $a_j$'s. That $H^T \L H =0$ is equivalent to the fact that the $g_j$'s commute between themselves. Also, the columns of $H$ are linearly independent since $M$ is an invertible matrix. This ensures that any subset of the $\omega^{a_j}g_j$'s will generate a strictly smaller subgroup of $S$. Construct the coding space $\CS$ from $S$, and note that $\dim \CS = 2^k$ (see Eq. \eqref{eq:CS} and the description below it). Thus $\CS$ is a code space corresponding to an $[[n,k]]_d$ stabilizer code. The $d^{n-k}$ distinct choices of $S$ and $\CS$ correspond to the $d^{n-k}$ different choices within the same stabilizer code. This tells us that the number of cosets is lesser than or equal to the number of $[[n,k]]_d$ stabilizer codes. \newline
Note that when the code to coset construction is reversed, one retrieves the original code which one started with. This is true for all codes. This proves the theorem.

% or  Now note that for any other $M'$ in the same coset, the construction would yield one of the same $d^{n-k}$ orthogonal $[[n,k]]_d$ coding spaces. This is due to the fact that the check matrix $H'$ which is extracted from $M'$, and $H$ are related as $H'=HA$, where $A$ is an $(n-k)\times(n-k)$ square matrix, and the distinction between $H$ and $H'$ is lost when one constructs the stabilizer group $S$. Thus, the construction is invariant over the choice of representative $M$ in the coset and one can thus associate the family of $d^{n-k}$ mutually orthogonal $[[n,k]]_d$ codes so constructed, to the same coset. Thus the number of cosets is lesser than or equal to the number of $d^{-(n-k)} \nStab{n}{k}{d}$. This proves the theorem. 
\end{proof} Using the decomposition of $\Tnkd$ from Sec. \ref{sec:decomposition_tnkd} we obtain the following theorem. \begin{widetext}
\begin{lemma}
\label{lem:stabilizer_codes}
Let $d$ be a prime power, i.e., $d=p^{m}$, where $p$ is a prime number and $m$ a positive integer. The total number of $[[n,k]]_{p^m}$ stabilizer codes, $\nStab{n}{k}{p^m}$ is 
\begin{align}
\label{eq:count_stab1}
  &   \nStab{n}{k}{p^m} \ = \ \left( p^{m-1} \right)^{\frac{(n-k)(n+3k+1)}{2}} \ \left[\begin{matrix} n \\ n-k \end{matrix}\right]_{p} \ \prod_{j=0}^{n-k-1} \left( \left(p\right)^{n-j} +1  \right) \end{align} where the Gaussian coefficient $\left[\begin{matrix} n \\ n-k \end{matrix}\right]_{p_i}$ is defined as
\begin{equation}
    \label{eq:Gaussian_coefficients}
    \left[\begin{matrix} n \\ n-k \end{matrix}\right]_{p} \ \coloneqq \ \prod_{j=0}^{n-k-1} \ \dfrac{p^{n-j}-1}{p^{n-k-j}-1}.
\end{equation} 
\end{lemma} \end{widetext}
\begin{proof} Using Lemma \ref{lem:count}, the decomposition in Sec. \ref{sec:decomposition_tnkd} and Lemma \ref{eq:cor_decomposition_uniqueness}, \begin{widetext}
\begin{equation}
    \label{eq:count_stab}
    \nStab{n}{k}{p^m} \ = \ \begin{cases}  \  \dfrac{\left| \Spnd{2n}{p^m}  \right|}{\left| B_S\left(n,k,p^m \right) \right| \ \left| B_A\left( n,k,p^m \right) \right| \  \left| \GL{n-k}{p^m}  \right| \  \left| \Spnd{2k}{p^m} \right|}, \ \mathrm{when} \ k \ge 1 \ \mathrm{and}  \\  ~ \\
    \dfrac{\left| \Sp  \right|}{\left| B_S \left( n, 0, p^m \right) \right| \left|\GL{n}{p^m} \right|}, \ \mathrm{when} \ k=0.
    \end{cases} 
\end{equation} \end{widetext} It is straightforward to compute that $\left| B_A\left( n,k,p^m \right)  \right|=\left( p^m \right)^{2k(n-k)}$, since in $M_E$ from Eq. \eqref{eq:M_E}, $N$ and $L$ are left completely unconstrained by Eq. \eqref{eq:NKL}, whereas simultaneously $K_A$ is completely determined by $N$ and $L$. Also note, $\left| B_S\left(n,k,p^m \right) \right|={p^m}^{\frac{1}{2}(n-k)\left(n-k+1\right)}$, since in $M_T$ from Eq. \eqref{eqM_T}, the only constraint which Eq. \eqref{eq:NKL} imposes on $K_S$ is that it be symmetric. $\left| \mathrm{GL}\left(n-k, \mathbb{Z}_{p^{m}} \right) \right|$ is computed in Corollary 2.8 in Ref. \cite{Han} (see also Ref. \cite{Vaidyanathan} for a more accessible arguments): \begin{align}
    \label{eq:order_order_GL}
 &   \left| \mathrm{GL}\left(n-k, \mathbb{Z}_{p^{m}} \right) \right| \notag \\   = &  \ p^{\left(m-1\right)(n-k)^2} \ \prod_{j=0}^{n-k-1}  \ \left( p^{n-k} - p^j \right).  
\end{align} $\left|\Spnd{2n}{p^{m}}\right|$ and $\left| \Spnd{2k}{p_i^{m_i}}\right|$ (needed for the case $k\ge 1$) are explicitly computed in the Appendix Sec. \ref{app:sec_sp_order} (which is based on Ref. \cite{hattice}), which gives us 
 \begin{align}
   \label{eq:order_sp_2} 
   \left| \Spnd{2n}{p^{m}}  \right| \ = \ {p}^{(2m-1)n^2+ (m-1)n} \ \prod_{j=1}^{n}\left( p^{2j}-1\right).
   \end{align} Then putting everything together in Eq. \eqref{eq:count_stab} gives us Eq. \eqref{eq:count_stab1} 
\end{proof}
\begin{rem}
\label{rem:Gross}
In a seminal paper on Wigner distributions on finite dimensional phase-space Ref. \cite{Gross2006}, Gross computed the number of $[[n,k]]_d$ codes when $d=p^m$, but the phase-space is identified with $\mathbb{F}_{p^m}^2$, not $\mathbb{Z}_{p^m}^2$. That the Pauli groups for both phase spaces are different is observed from the following fact. Set $n=1$. Then $\Ppm$ is homomorphic to the additive abelian group of $\mathbb{Z}_{p^m}^{2}$ (i.e., when only looking at $\mathbb{Z}_{p^m}^{2}$ as an abelian group), whereas $\Pf$ is homomorphic to the additive abelian group of $\mathbb{F}_{p^m}^2$. Note that the abelian group of $\mathbb{Z}_{p^m}^{2}$ decomposes as the product of two cyclic groups: $\left(\mathbb{Z}_{p^m}^{2},+\right)\simeq \left(\mathbb{Z}_{p^m},+\right)\times \left(\mathbb{Z}_{p^m},+\right) $ , whereas that of $\mathbb{F}_{p^m}^2$ decomposes as a product of $2m$ cyclic groups, i.e.,  $\left(\mathbb{F}_{p^m}^2,+)\right)\simeq \left(\mathbb{Z}_p,+\right)^{\times 2m}$. Furthermore these decompositions are unique by the structure theorem of finite abelian groups. When $m=1$, i.e., $d=p$, the formula in Theorem 20 in Ref. \cite{Gross2006} matches that with Eq. \eqref{eq:count_stab1}, but for larger $m$, the formula differs on account of the fact that while $\mathbb{F}_{p^m}$ is a field, $\mathbb{Z}_{p^m}$ isn't. For completeness, in Appendix \ref{app:sec:Gross_proof} we explain why this renders the proof for Theorem 20 in Ref. \cite{Gross2006} inapplicable to our case.
\end{rem} %\tanmayc{
%\begin{rem}
%\label{rem:Hanlon} In \cite{Hanlon2005}, the number of isotropic subspaces was counted using the same strategy that we use in this work. But it was done  
%\end{rem} }
\begin{thm} \label{thm:counting_general_d} When $d$ is an arbitrary positive integer with prime factorisation $d=\prod_{i=1}^r p_i^{m_i}$, where $p_i$ are distinct primes, $m_i$ are positive integers  and $r$, which is the number of distinct prime factors, is also a positive integer. Then total number of $[[n,k]]_d$ stabilizer codes is
\begin{align} 
\label{eq:count_stab2}
&\nStab{n}{k}{d} \ =  \ \prod_{i=1}^r \ \nStab{n}{k}{p_i^{m_i}}.
\end{align} Thus we get
\begin{equation}
    \label{eq:count_stab2_1}
    \nStab{n}{k}{d} \ =  \ d^{\frac{(n-k)\left(n+3k+1\right)}{2}} \ \  \prod_{i=1}^r   \ \zeta_i,
\end{equation} where 
\begin{equation}
    \label{eq:zeta}
    \zeta_i \ \coloneqq \ \prod_{j=0}^{n-k-1} \left( \dfrac{1-{p_i}^{-2(n-j)}}{1-{p_i}^{-(n-k-j)}} \right).
\end{equation}\end{thm}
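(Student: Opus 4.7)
The plan is to leverage the Chinese Remainder Theorem (CRT) to reduce the computation to the prime-power case already handled in Lemma \ref{lem:stabiliser_codes}, and then simplify the resulting product into the clean form involving the $\zeta_i$ factors.

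First, I would observe that CRT gives a ring isomorphism $\phi:\Z \to \prod_{i=1}^{r}\Zj{m_i}$ when $d=\prod_i p_i^{m_i}$ with distinct primes $p_i$. Since matrix multiplication, transposition, and the symplectic form $\L$ (whose entries are $0,\pm 1$ and therefore pulled back consistently) are all defined entry-wise, $\phi$ extends componentwise to a group isomorphism $\Phi: \Sp \to \prod_{i=1}^{r}\Spnd{2n}{p_i^{m_i}}$: a matrix $M$ satisfies $M^T\L M = \L$ over $\Z$ if and only if each of its images $M_i$ satisfies $M_i^T \L M_i = \L$ over $\Zj{m_i}$.

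Next, I would verify that $\Phi$ restricts to an isomorphism $\Tnkd \to \prod_{i=1}^{r}\Tnk{p_i^{m_i}}$. This reduces to checking that both defining features of $\Tnkd$ in Eq.~\eqref{eq:M_form}, namely the zero-block pattern and the invertibility of the upper-left $(n-k)\times(n-k)$ block $A$, are compatible with $\Phi$. The zero pattern is obviously preserved entry-wise, and $A$ is a unit in the matrix ring over $\Z$ iff each image $A_i$ is a unit over $\Zj{m_i}$, which follows from the CRT identity $(\Z)^{\times}\simeq \prod_i(\Zj{m_i})^\times$ applied to $\det A$. Combining these steps with Lemma~\ref{lem:count} yields
\begin{equation*}
\nStab{n}{k}{d} \ = \ \frac{|\Sp|}{|\Tnkd|} \ = \ \prod_{i=1}^{r}\frac{\left|\Spnd{2n}{p_i^{m_i}}\right|}{\left|\Tnk{p_i^{m_i}}\right|} \ = \ \prod_{i=1}^{r}\nStab{n}{k}{p_i^{m_i}},
\end{equation*}
which is Eq.~\eqref{eq:count_stab2}.

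Finally, I would substitute the prime-power formula from Lemma~\ref{lem:stabiliser_codes} into each factor and simplify. Using the identity $(p^{n-j}-1)(p^{n-j}+1)=p^{2(n-j)}-1$, the product of the Gaussian coefficient with $\prod_{j=0}^{n-k-1}(p^{n-j}+1)$ collapses to $\prod_{j=0}^{n-k-1}(p^{2(n-j)}-1)/(p^{n-k-j}-1)$. Factoring $p^{2(n-j)}$ out of each numerator and $p^{n-k-j}$ out of each denominator produces the ratios $(1-p^{-2(n-j)})/(1-p^{-(n-k-j)})$, i.e.\ $\zeta_i$ of Eq.~\eqref{eq:zeta}, with a residual power $p^{\sum_{j=0}^{n-k-1}(2(n-j)-(n-k-j))}=p^{(n-k)(n+3k+1)/2}$. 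Combined with the prefactor $(p^{m-1})^{(n-k)(n+3k+1)/2}$ this yields $(p^m)^{(n-k)(n+3k+1)/2}\zeta_i$, and taking the product over $i$ assembles $d^{(n-k)(n+3k+1)/2}\prod_i \zeta_i$, i.e.\ Eq.~\eqref{eq:count_stab2_1}. The main obstacle is the bookkeeping in this last exponent accounting; the CRT factorisation itself is structural and essentially automatic once one checks that symplecticity and the $\Tnkd$ block pattern descend componentwise, which is the technical but conceptually routine core of the argument.
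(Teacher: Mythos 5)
Your proposal is correct and follows essentially the same route as the paper: the Chinese Remainder Theorem induces componentwise group isomorphisms $\Sp \simeq \prod_i \Spnd{2n}{p_i^{m_i}}$ and $\Tnkd \simeq \prod_i \Tnk{p_i^{m_i}}$, which combined with Lemma~\ref{lem:count} gives the product formula, and the algebraic simplification via $(p^{n-j}-1)(p^{n-j}+1)=p^{2(n-j)}-1$ with the exponent sum $\sum_{j=0}^{n-k-1}(n+k-j)=(n-k)(n+3k+1)/2$ matches the paper's Eq.~\eqref{eq:Gaussian_coeff}. If anything, you supply slightly more detail than the paper on why the CRT isomorphism restricts correctly to $\Tnkd$ (the zero-block pattern and invertibility of $A$ descending componentwise), which the paper merely asserts as a simple corollary.
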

\begin{proof}
We invoke the Chinese remainder theorem \cite{Lang}, which tells us that \begin{equation}
    \label{eq:Chinese_remainder_theorem}
    \Z \  \simeq \  \mathbb{Z}_{p_1}^{m_1} \times \mathbb{Z}_{p_2}^{m_2} \times \cdots \mathbb{Z}_{p_r}^{m_r},
\end{equation} is a \emph{ring} isomorphism. A simple corollary of this ring isomorphism is the following two group isomorphisms. \begin{widetext}
\begin{align}
    \label{eq:Chinese_remainder_theorem_cor}
   & \Sp \ \simeq \ \Spnd{2n}{p_1^{m_1}} \times \Spnd{2n}{p_2^{m_2}} \times \cdots \times  \Spnd{2n}{p_r^{m_r}} \notag \\ 
 &   \Tnkd \ \simeq \ \Tnk{p_1^{m_1}} \times \Tnk{p_2^{m_2}} \times \cdots \times \Tnk{p_r^{m_r}},
\end{align} \end{widetext} where the $\times$ symbol on the RHS denotes the direct product of groups. \tanmayc{To explain Eq. \eqref{eq:Chinese_remainder_theorem_cor}, we refer the interested reader to Section \ref{app:chinese_remainder_thm} of the Appendix.} Invoking Eq. \eqref{eq:Chinese_remainder_theorem_cor} into Lemma \ref{lem:count} thus proves Eq. \eqref{eq:count_stab2}. For Eq. \eqref{eq:count_stab2_1}, we first simplify the RHS of Eq. \eqref{eq:count_stab1} using the following.
\begin{align}
    \label{eq:Gaussian_coeff}
   &  \left[\begin{matrix} n \\ n-k \end{matrix}\right]_{p_i} \ \prod_{j=0}^{n-k-1} \left( \left(p_i\right)^{n-j} +1  \right) \ \notag \\ = & \ p_i^{\frac{(n-k)\left(n+3k+1\right)}{2}} \ \prod_{j=0}^{n-k-1} \left( \dfrac{1-{p_i}^{-2(n-j)}}{1-{p_i}^{-(n-k-j)}} \right). 
\end{align} Using Eq. \eqref{eq:Gaussian_coeff} in Eq. \eqref{eq:count_stab1} gives us
\begin{equation}
    \label{eq:count_stab2_2}
    \nStab{n}{k}{p_i^{m_i}} \ = \ {p_i^{m_i}}^{\frac{(n-k)\left(n+3k+1\right)}{2}} \  \zeta_i,
\end{equation} which is then invoked into Eq. \eqref{eq:count_stab2}.
\end{proof}
It is often important to simply get an order of magnitude of $\nStab{n}{k}{d}$. Towards that result, we obtain the following corollary as a result of Theorem \ref{lem:stabilizer_codes}. 
\begin{cor}
\label{cor:nkd_count_order} Let $c=2.17$. Then \tanmayc{the} number of $[[n,k]]_d$ stabilizer codes scales as
\begin{equation}
    \label{eq:nkd_count_order}
   d^{\frac{(n-k)\left(n+3k+1\right)}{2}} \ \le \nStab{n}{k}{d} \ < \ \  d^{\frac{(n-k)\left(n+3k+1\right)}{2}+c},
\end{equation} the number of $[[n,k]]_d$ stabilizer code spaces scale as 
\begin{equation}
    \label{eq:nkd_count_order_code_space}
   d^{\frac{(n-k)\left(n+3k+3\right)}{2}} \ \le \CSp \ < \ \  d^{\frac{(n-k)\left(n+3k+3\right)}{2}+c},
\end{equation} and thus the number of stabilizer states scale as 
\begin{equation}
    \label{eq:nkd_count_order_states}
   d^{\frac{n(n+3)}{2}} \ \le \CSt \ < \ \  d^{\frac{n(n+3)}{2}+c}.
\end{equation}
\end{cor}
\begin{proof}
It is not difficult to upper bound $\zeta_i$ by a constant which is independent of $p_i$, $n$ and $k$. For instance, in Appendix Sec. \ref{app:sec:inequality} we show that  
\begin{equation}
    \label{eq:bound_prod}
    \zeta_i \ < \ e^{1.57}, \ \forall \ n , \  k, \, \mathrm{and}  \ \mathrm{primes} \ p.
\end{equation} Thus we now use Eq. \eqref{eq:Gaussian_coeff} and Eq. \eqref{eq:bound_prod} %tells us that  there is some constant $c$, such that \begin{equation}\label{eq:Gaussian_coeff2}
%\left[\begin{matrix} n \\ n-k \end{matrix}\right]_{p_i} \ \prod_{j=0}^{n-k-1} \left( \left(p_i\right)^{n-j} +1  \right) \ = \ e^2 \ p_i^{\frac{(n-k)\left(n+3k+1\right)}{2}} .\end{equation} 
%Substituting Eq. \eqref{eq:Gaussian_coeff2} 
in Eq. \eqref{eq:count_stab1} 
\begin{equation}
    \label{eq:count_stab3} 
    \nStab{n}{k}{p_i^{m_i}} \ < \ e^{c_1} \  \left( p_i^{m_i}\right)^{\frac{(n-k)(n+3k+1)}{2}},
\end{equation} where $c_1 = 1.57$. Let the number of distinct prime factors of $d$ be denoted by $r(d)$. Then Eq. \eqref{eq:count_stab2} and \eqref{eq:count_stab3} tell us 
\begin{equation}
\label{eq:count_stab_4}
\nStab{n}{k}{d} \ < \ e^{c_1   r(d)} \  \ d^{\frac{(n-k)(n+3k+1)}{2}}.
\end{equation}
It is known that $r(d)$ may be upper bounded as follows (see Theorem 11, p. 369 in Ref. \cite{Robin}): 
\begin{equation}
    \label{eq:r_upperbound}
    r(d) \ \le \ c_2  \ \dfrac{\log d}{\log \log d},  \ \mathrm{for} \ d \ge 3. 
\end{equation} where $c_2 = 1.38$. Thus we get that $e^{c_1 r(d)} \ \le \ d^{c}$, where $c=c_1 c_2 \approx 2.17 $. Thus we get 
\begin{equation}
\label{eq:count_stab_5}
\nStab{n}{k}{d} \ < \ d^{\frac{(n-k)(n+3k+1)}{2} + c}.
\end{equation} Note that for $d=2$, Eq. \eqref{eq:count_stab3} is already satisfied, and hence, $d=2$ also satisfies Eq. \eqref{eq:count_stab_5}. The number of $[[n,k]]_d$ stabilizer code spaces is $d^{n-k} \ \nStab{n}{k}{d}$, and the number of stabilizer states is obtained by setting $k=0$ in the number of stabilizer code spaces.
 \end{proof} 
\section{Conclusions}
\label{sec:conclusion}
In this work we count the number of $[[n,k]]_d$ stabilizer codes, $\nStab{n}{k}{d}$ for arbitrary $d$-level systems, where the configuration space of such systems is $\Z$. Since $\Z$ is not a field when $d$ is non-prime, the method we used for this has to differ from earlier works in Ref. \cite{Aaronson2004} for qubits and Ref. \cite{Gross2006} for prime-dimensional qudits and Galois-qudits, which relied on $d$ being prime. Our method is broadly broken up into two parts: (i) proving a bijection between distinct $[[n,k]]_d$ QECC and cosets of $\Spnd{2n}{\Z}/\Tnkd$, where $\Tnkd$ is a subgroup, which corresponds to the $[[n,k]]_d$ trivial code, with stabilizer group $S \ = \ \left\langle Z_j \right\rangle_{j=1}^{n-k}$, and (ii) computing $\left| \Sp \right|/ \left| \Tnkd \right|$, for which we make use of the Chinese remainder theorem. We find that the number of $[[n,k]]_d$ stabilizer codes $\nStab{n}{k}{d}$, the number of $[[n,k]]_d$ stabilizer code subspaces $\CSp$ and the number of stabilizer states $\CSt$ scale agnostic of whether $d$ is prime or non-prime. This is surprising since the prime or non-prime nature of $d$ played an important role in our computation. In Section \ref{sec:introduction} we listed salient topics where $\CSt$ plays the role as an important quantifier: the resource theory of magic, the classical simulation of stabilizer-only circuits, projective-designs of stabilizer states, a de Finetti theorem customized for stabilizer operations, the study of quantum contextuality for small systems \tanmayc{and} the study of Wigner functions. Since $\CSt$ was so far known only for prime $d$ (and Galois-qudits), the results were limited to such cases. For resource theory of magic, our work allows one to make statements for arbitrary $d$. For the remaining topics, we believe that our computation of $\CSt$ will prove useful for generalising the corresponding results to arbitrary $d$. Thus our work also contributes towards the important goal of placing results for arbitrary $d$ qudit systems on the same pedestal as for prime qudit systems or Galois qudit systems. \newline 
%The Chinese remainder theorem plays a key role in our computation. While some earlier work in quantum computing and information theory have  used it (see for e.g. Ref. \cite{Chaturvedi2010, Appleby2012, Andersson2019, Braasch22}), we don't think it is well known. We believe that wider usage of the Chinese remainder theorem would help generalise many results in quantum computing, which are currently only known for prime-dimensional cases.

\begin{acknowledgments}
We wish to acknowledge help from the referees, whose comments were very useful for us.  H.-S.G. acknowledges support from the National Science and Technology Council, Taiwan under Grants  No.~NSTC 112-2119-M-002-014, No.~NSTC 111-2119-M-002-006-MY3, No.~NSTC 110-2627-M-002-002, and No.~NSTC 111-2627-M-002-001, and from the National Taiwan University under Grant No.~NTU-CC-112L893404. H.-S.G. is also grateful for the support from the “Center for Advanced Computing and Imaging in Biomedicine (NTU-112L900702)” through The Featured Areas Research Center Program within the framework of the Higher Education Sprout Project by the Ministry of Education (MOE), Taiwan, and the support from the Physics Division, National Center for Theoretical Sciences, Taiwan.
\end{acknowledgments}

\appendix
\onecolumngrid
\section{A short summary of the resource theory of magic}
\label{app:resource}
An operationally significant resource theory of magic has been studied and developed over the past decade. In this theory stabilizer operations are considered a free resource. By stabilizer operations we mean the following. (i) Initiating the starting state of a quantum computation as an eigenstate of some multi-qudit Pauli operator. Often this is the state $\ket{0,0,\cdots,0}$, which spans the $[[n,0]]_d$ trivial code for the $n$-qudit system with stabilizer group $S= \left\langle Z_j \right\rangle_{j=1}^n$. (ii) Performing Clifford operations during the computation. (iii) The only observables which are subjected to quantum measurements are multiqudit Pauli operators. It is known that stabilizer operations are not universal \cite{Aaronson2004}, and universality demands the inclusion of some non-stabilizer operations, such as the $\pi/8$ gate via magic state injection. From the perspective of fault-tolerant quantum computing, performing non-stabilizer operations at the \emph{logical} level of the quantum code, is a high-cost incurring operation (see for e.g. \cite{OGorman17}), which justifies the differentiation between the `free' resources of stabilizer operations (the aforementioned three operations) and the costly resources of non-stabilizer operations (magic state injection).
\section{Proof of Corollary \ref{Cor:LI}}
\label{app:module}
$\Zn$ is a free-$\Z$ module. This means that one may simply view $\Zn$ as an abelian group with vector addition as the corresponding group operation. Furthermore, $\Zn$ is then simply seen to be a direct product group of $\Z$ with itself $n$ times.  
\subsection{Any set of $n$ LI  vectors is a basis for $\Zn$}
\label{subsec:app1}
By {\it basis} for $\Zn$, we mean a {\it minimal generating set } for $\Zn$ as a group. Any set of $n$ LI vectors is a minimal generating set since there are precisely $d^n$ distinct linear combinations of these $n$ vectors, and the LI property guarantees that each distinct linear combination results in a distinct vector in $\Zn$. Since there are $d^n$ vectors in $\Zn$, each vector must be one of the $d^n$ possible linear combinations.
\subsection{If $\mathbf{a_j}$'s are LI, then their linear span generates a subspace of dimension $m$}
\label{subsec:app2}
By the linear span of the $\vect{a_j}$'s, we mean the subgroup of $\Zn$, which the $\vect{a_j}$'s generate. Since the $\vect{a_j}$'s are LI and since they are $m$ in number, they generate a subgroup of size $d^m$. However, not all subspaces of size $d^m$ are isomorphic. It remains to justify the concept of an {\it $m$ dimensional subspace}. We use the term {\it $m$-dimensional subspace} to mean a subgroup of $\Zn$, which is isomorphic to $\Z^m$. One may easily construct such an isomorphism from $\mathrm{span} \left\{ \vect{a_j} \right\}_{j=1}^m$ to $\Z^m$ as follows: define $T:\mathrm{span} \left\{ \vect{a_j} \right\}_{j=1}^m \rightarrow \Z^m$ as follows: \begin{equation}
    \label{eq:isomorphism}
    T \left(  x_1 \vect{a_1} + x_2 \vect{a_2} + \cdots + x_m \vect{a_m} \right) \ = \ (x_1,x_2,\cdots, x_m), \ \forall \ x_j \in \Z.
\end{equation}
$T$ is readily seen to be an isomorphism (in the group theoretic sense). 
\subsection{For any $m < n$, any set of $m$ LI vectors may be extended to form a basis for $\Zn$.} 
\label{subsec:app3} We prove this by finding an invertible matrix $A$ whose first $m$ columns are given by the LI vectors $\vect{a}_1$, $\vect{a}_2$, $\cdots$, $\vect{a}_m$. Then we may simply add the remaining $n-m$ columns to obtain a set of $n$ linearly independent columns.  We prove this inductively. \\~ \\
\textbf{When $\mathbf{m=1}$}. We adapt the arguments given in Ref. \cite{user0}\footnote{The proof in Ref. \cite{user0} is itself taken from Ref. \cite{Newman} (Corollary II.I). We can't rely on the proofs in Ref. \cite{user0,Newman} directly, since these results are valid only for commutative rings with no zero divisors. In our case, when $d$ is non-prime, $\Z$ has zero divisors.} to our purpose. To say the $\vect{a_1}=\vect{a}$ is linearly independent means that there is no non-zero $x \in \Z$ such that $x \vect{a}=0$. In other words, $\vect{a}$ is of order $d$. This implies that if $\vect{a}=(a_1,a_2,\cdots,a_n)^T$, then as integers in $\mathbb{Z}$, $\mathrm{g.c.d.}\left\{a_1,a_2,\cdots,a_n,d\right\}=1 \in \mathbb{Z}$. We can thus apply B\'ezout's identity (see Ref. \cite{Bezout}), in the same way that it is applied in Ref. \cite{user0}. For completeness, we give the whole proof here.  \newline 
Proceeding inductively, suppose that $n=2$ first. B\'ezout's identity tells us that there exist $b_1$, $b_2$ and $d'$ in $\mathbb{Z}$ such that $a_1 b_2 - a_2 b_1 + d' d = 1$ in $\mathbb{Z}$. Thus we also get that $\left(a_1 b_2 - a_2 b_1 \right) \mod d = 1 \in \Z$. With $A=\SmallMatrix{a_1 & b_1 \\ a_2 & b_2 } \mod d$, the result is proved for $n=2$. Next, suppose that the result is true for $n-1$. We need to then prove the result for $n$. As before $\mathrm{g.c.d}\left\{ a_1, a_2, \cdots, a_n, d \right\}=1 \in \mathbb{Z}$. There exists some integer $g$ which is a common factor of $a_1$, $a_2$, $\cdots$ and $a_{n-1}$ such that if $b_i = a_i/g$, then $\mathrm{g.c.d}\left\{b_1, b_2, \cdots, b_{n-1},d \right\}=1$. It isn't necessary that $g$ equals $g'\coloneqq \mathrm{g.c.d}\left\{a_1, a_2, \cdots, a_{n-1},d \right\}$, because $\mathrm{g.c.d}\left\{ a_1/g', a_2/g', \cdots, a_{n-1}/g', d\right\}$ need not be $1$. Writing $d=g'h'$, we see that $a_i/g'$ have no common factors with $h'$ (otherwise $g'$ would have been larger). Thus the only factors which $a_1/g'$, $a_2/g'$, $\cdots$, $a_{n-1}/g'$ have in common with $d$ will occur in $g'$. Writing $\mathrm{g.c.d}\left\{ a_1/g',a_2/g',\cdots,a_{n-1}/g',g'\right\}\eqqcolon\tilde{g}'$, and defining $g''=\tilde{g'}g'$, we examine whether $a_1/g''$, $a_2/g''$, $\cdots$, $a_{n-1}/g''$ and $d$ have any common factors. If not, we may choose $g=g''$, otherwise we iterate recursively. These iterations have to stop at some point since the numbers $a_1$, $a_2$, $\cdots$, $a_{n-1}$ are finite. Note that $g/g'$ is necessarily a factor of $g'$ itself. Since the hypothesis is assumed true for $n-1$, corresponding to $\vect{b}\in \mathbb{Z}^{n-1}$, where $\vect{b}=(b_1,b_2, \cdots, b_{n-1})^T$, we can construct an $(n-1)\times(n-1)$ matrix $B$ whose first column is $\vect{b}$ and such that $\det B + d'd =1 \in \mathbb{Z}$. Construct an $n \times n$ matrix $A$ in the same fashion as in Ref. \cite{user0}. 
\begin{equation}
\label{eq:Auser0}
A^T = \begin{pmatrix} gb_1 & gb_2 \cdots & gb_{n-1} & a_{n1}  \\ 
 b_{21} & b_{22} \cdots & b_{2,n-1} & 0  \\ 
 \vdots & \vdots \ddots & \vdots & \vdots \\ 
 b_{n-1,1} & b_{n-1,2} \cdots & b_{n-1,n-1} & 0  \\
 rb_1 & rb_2 \cdots & rb_{n-1} & s  \end{pmatrix},
\end{equation} which gives us 
$\det A = \left( s g -  a_{n1}   r  \right) \det B $, and since $\det B = 1 -d'd$, we get $\det A = \left( s g - a_{n1} r \right)\left( 1 - d'd \right)$. Note that $\mathrm{g.c.d} \left\{ g', a_{n1},d \right\}=1$, which tells us that $g'$ and $a_{n1}$ are co-prime. Since $g/g'$ is also a factor of $g'$, we get that $\mathrm{g.c.d} \left\{ g, a_{n1},d \right\}=1$ as well. Hence B\'ezout's identity informs us that there will exist some integers $s$, $r$ and $d''$ in $\mathbb{Z}$ so that $s g - a_{n1} r + d'' d= 1 $. Hence we get that $\det A = (1-d'd)(1-d''d)$. In $\Z$, $\det A = 1$, which implies that $A \mod d$ is invertible. This proves our result for $m=1$.
\\ ~ \\ 
\textbf{Assumed true for $m$. To prove for $m+1$.}  $\vect{a}_1$, $\vect{a}_2$, $\cdots$, $\vect{a}_m$, $\vect{a}_{m+1} \in \Zn$, such that they are LI. Our hypothesis is assumed true for $m$ vectors, so $\vect{a}_1$, $\vect{a}_2$, $\cdots$, $\vect{a}_m$ can be placed as the (left-most) columns of an $n \times n$ matrix $A$, which is invertible. Consider the isomorphism on $\Zn$: $\vect{a} \rightarrow A^{-1} \vect{a}$. This maps $\vect{a_j}=\vect{e}_j$ for $j=1$ to $m$, where $\vect{e}_j$ are the standard basis vectors whose only non-zero component is the $j$-th component, and this component is equal to $1$. Let $A^{-1} \vect{a}_{m+1} = (c_1,c_2,\cdots,c_m,c_{m+1},c_{m+2},\cdots,c_n)^T \eqqcolon \vect{c}$. Since $A^{-1}$ is an isomorphism on $\Zn$, we get that $\vect{e}_1$, $\vect{e}_2$, $\cdots$, $\vect{e}_m$ and $\vect{c}$ are LI. Consider the following.
\begin{equation}
    \label{eq:matrix_multi}
    \begin{pmatrix}
    \unity_m &  0 \\ 
        0 &  A'
    \end{pmatrix} \begin{pmatrix} \begin{matrix} \unity_m \\ 0 \end{matrix} & \rvline & \begin{matrix} \vect{c} \end{matrix} 
    \end{pmatrix},
\end{equation}
where the matrix on the left is an $n \times n$ matrix. The upper right $0$ denotes a block matrix of size $m \times n-m$ whose entries are all $0$. The lower left $0$ denotes another block matrix of size $(n-m) \times m$ and whose entries are all $0$. $A'$ is an $(n-m) \times (n-m)$ matrix, as yet unspecified. The matrix on the right is of size $n \times (m+1)$, whose first $m$ columns are the vectors $\vect{e}_j$ ordered from $1$ to $m$, and the $(m+1)$-th column is $\vect{c}$. Denote $\vect{c}' \coloneqq (c_{m+1}, c_{m+2},\cdots, c_n)^T$. Our hypothesis allows us to choose $A'$ to be invertible and such that $A' \vect{c}'= (1,0,\cdots, 0)^T$, which is a column vector of size $n-m$. Defining $\vect{c}''\coloneqq (c_1, c_2, \cdots, c_m)^T$, so that $\vect{c}^T=(\vect{c}''^T,\vect{c}'^T)$, define an $n \times n$ matrix $T$
\begin{equation}
    \label{eq:matrix_multi_T}
    T = \begin{pmatrix}
    \unity_k & \rvline & -\vect{c}'' & \rvline & 0 \\
    0 & \rvline & 1 & \rvline & 0 \\
    0 & \rvline & 0 & \rvline & \unity_{n-k-1} 
    \end{pmatrix}.
\end{equation} The diagonal blocks $\unity_k$, $1$ and $\unity_{n-k-1}$ are of sizes $k$, $1$ and $n-k-1$ respectively. These determine the sizes of the remaining blocks of $T$. When $T$ is multiplied on the left of Eq. \eqref{eq:matrix_multi}, we get an $n \times (m+1)$ sized matrix whose columns are $\vect{e}_1$, $\vect{e}_2$ $\cdots$, $\vect{e}_m$ and $\vect{e}_{m+1}$. Denoting the left matrix in Eq. \eqref{eq:matrix_multi} by $A''$, and defining $Q \coloneqq T A'' A^{-1}$, we see that $Q$ is an invertible matrix, and the action of $Q$ on $\vect{a}_j$ is to produce $\vect{e}_j$ for $j=1$ to $m+1$. This tells us that $\vect{a}_j$ is the $j$-th column of the matrix $Q^{-1}$. Thus given $\vect{a}_j$ for $j=1$ to $m+1$, we may extend it to a set of $n$ LI vectors by adding the remaining columns as vectors to this set. This proves our result. 
\section{Extension of a check matrix to a symplectic matrix}
\label{appendix:symplectic_realization} \tanmayc{\begin{rem} For the case when $d$ is a prime (or more generally, when $\Z$ is a field), this section reduces to a proof of Witt's extension theorem. A proof of Witt's extension theorem for the general case is hard to find. Thus we include a proof here for the interested reader.\end{rem}} 
Consider the stabilizer group $S$ for an $[[n,k]]_d$ stabilizer code. For a choice of independent generators $g_j$ of $S$ and an ordering, one may readily construct the check matrix $H$ using these generators (see Eq. \eqref{eq:checkmatrix}). The columns of $H$ are linearly independent and satisfy the commutation relations given by Eq. \eqref{eq:condition1}. Our goal in this section of the Appendix is to prove that the $2n \times (n-k)$ matrix $H$ may be extended to a $2n \times 2n$ symplectic matrix \tanmayc{$S$}.
\begin{prop}{Proposition 10.4, \cite{Nielsen}}
For each $i=1,2\cdots,n-k$, there exists some $g \in \Pn$, such that $g_i g = \omega g g_i $, and $g_jg=gg_j$ when $j\neq i$. 
\end{prop}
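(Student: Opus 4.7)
The plan is to translate the desired commutation conditions into a single linear system over $\Znn$ via the symplectic representation, and then invoke the module-theoretic surjectivity provided by Corollary \ref{Cor:LI}(iv). Writing the sought $g$ as $g(\vect{b})$ for some $\vect{b}\in\Znn$, and applying Eq.~\eqref{eq:commutation2}, one has $g_j g = \omega^{-\vect{a_j}^{T}\L\vect{b}}\,g\,g_j$ for every generator $g_j = g(\vect{a_j})$. Hence the condition $g_i g = \omega g g_i$ together with $g_j g = g g_j$ for $j\neq i$ is equivalent to the single linear system
\begin{equation*}
H^{T}\L\,\vect{b} \ = \ -\vect{e_i} \pmod d,
\end{equation*}
where $H=[\vect{a_1}|\cdots|\vect{a_{n-k}}]$ is the check matrix from Eq.~\eqref{eq:checkmatrix} and $\vect{e_i}\in\Z^{n-k}$ denotes the $i$-th standard basis vector.

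The task therefore reduces to showing that this system is solvable over $\Z$, i.e.\ that the $\Z$-linear map $\vect{b}\mapsto H^{T}\L\vect{b}$ from $\Znn$ onto $\Z^{n-k}$ is surjective. First I would note that $\L^{T}\L = I_{2n}$ (a direct check from Eq.~\eqref{eq:Lambda}), so left-multiplication by $\L$ is a $\Z$-module automorphism of $\Znn$; this reduces surjectivity of $\vect{b}\mapsto H^{T}\L\vect{b}$ to surjectivity of $H^{T}\colon\Znn\to\Z^{n-k}$. Because the columns of $H$ are linearly independent (this is precisely condition Eq.~\eqref{eq:condition2} that defines a check matrix), Corollary~\ref{Cor:LI}(iv) gives that the range of $H^{T}$ has full dimension $n-k$, i.e.\ $H^{T}$ is surjective. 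Consequently a vector $\vect{b}\in\Znn$ solving the system above exists, and $g\coloneqq g(\vect{b})\in\Pn$ satisfies all required commutation relations.

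The main obstacle here is purely conceptual rather than computational: over a field, the implication ``linearly independent columns $\Rightarrow$ transpose surjective'' is immediate, but over $\Z$ with $d$ non-prime one must work in the module-theoretic setting established earlier in the paper. The entire weight of the argument is therefore shifted onto Corollary~\ref{Cor:LI}(iv), whose justification is carried out in Appendix~\ref{app:module}; once that corollary is granted, the present proposition follows in one line. I would also remark that the $\vect{b}$ produced by the above system is far from unique — any element of $\ker(H^{T}\L)$ may be added to it — but this non-uniqueness is exactly what the later subgroup $\Tnkd$ of Section~\ref{sec:groupstructure} is designed to encode, so no further effort is needed at this stage.
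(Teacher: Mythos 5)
Your reduction of the proposition to the single linear system $H^{T}\L\vect{b}=-\vect{e_i}$ is exactly the paper's strategy, and the observation that invertibility of $\L$ (indeed $\L^{T}\L=I_{2n}$) reduces everything to surjectivity of $H^{T}\colon\Znn\to\Z^{n-k}$ is correct; your sign bookkeeping is in fact slightly more careful than the paper's. The one place where your argument is thinner than it looks is the final step: you place the entire weight on Corollary~\ref{Cor:LI}(iv), but that is the one part of the corollary that Appendix~\ref{app:module} never actually proves --- the appendix establishes only parts (i)--(iii) --- and it is precisely the statement that is non-obvious when $d$ is composite. Accordingly, the paper's own proof does not cite (iv); it re-derives the surjectivity from part (iii): the linearly independent columns $\vect{a_1},\dots,\vect{a_{n-k}}$ are extended to a basis of $\Znn$, the resulting $2n\times 2n$ matrix $\He$ (with $H^{T}$ as its top rows) is shown to be invertible because its rows, and hence also its columns, form a basis, and the system $\He\vect{x'}=\vect{e'_i}$ is solved and then projected back down to $H^{T}\L\vect{x}=\vect{e_i}$. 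To make your version self-contained you need only insert the same one-line bridge: by part (iii) there is an invertible $B$ whose first $n-k$ columns are the $\vect{a_j}$, so $H^{T}=[\,I_{n-k}\ |\ 0\,]\,B^{T}$ is the composition of an automorphism of $\Znn$ with the coordinate projection onto $\Z^{n-k}$ and is therefore surjective (equivalently, its range is a subgroup of $\Z^{n-k}$ of order $d^{n-k}$, hence all of $\Z^{n-k}$). With that sentence added, your proof and the paper's coincide; your closing remark about the non-uniqueness of $\vect{b}$ being absorbed into $\Tnkd$ is also consistent with how the paper proceeds.
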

\begin{proof}
Our proof will be based on the proof in \cite{Nielsen}, which works perfectly well for the case when $d=p$ is a prime, and $\Z =\F$ is a field. In that case, one simply seeks the solution to the following equation: $ H^T \L \vect{x} = \vect{e_i}$, where $\vect{x}\in\F^{2n}$ is a column vector whose solution is sought, and $\vect{e_i}\in \F^{n-k}$, is the vector whose only non-zero element is the $i$-th component, and this component is equal to $1$. The existence of the solution the aforementioned equation is based on the fact that $\dim \F^{n-k}=\mathrm{rank}{H^T \L}$, thus the transformation $\vect{x} \rightarrow H^T \L \vect{x} $ is surjective in $\F^{n-k}$. To establish that the same result holds true when $d$ is a composite number, we use the following arguments: since $\vect{a_1}$, $\vect{a_2}$, $\cdots$, $\vect{a_{n-k}}$ are LI, one may extend this to a basis of $2n$ columns vectors in $\Znn$ (see Subsection \ref{subsec:app3}). Construct a $2n \times 2n$ matrix $\He$, by adding to $H^T$ the additional $n+k$ basis vectors, by transposing them and placing the corresponding rows below $H^T$. Since the rows of $\He$ are linearly independent and since they are $2n$ in number, by arguments in Subsection \ref{subsec:app1} they form a basis. Hence solutions to the equations $\vect{x}^T \He  = \vect{e_j}^T$ may be found for all $j=1,2,\cdots, 2n$, which implies that $\He$ has a left-inverse. Since the left-inverse of $\He$ is also its right inverse, that implies that the columns of $\He$ are also linearly independent, and since these columns are $2n$ in number, they also form a basis using the arguments in Subsec. \ref{subsec:app1}. Thus one may then find the solution to the equation:
\begin{equation}
    \label{eq:He}
    \He \ \vect{x'} \ = \ \vect{e'_i},
\end{equation}
where $\vect{x'}, \vect{e'_i}  \in \Znn$. Here $\vect{e'_i}$ is the vector whose only non-zero entry is the $i$-th entry, and this entry is equal to $1$, whereas $\vect{x'}$ is a vector whose solution is sought. That $\He$ is invertible tells us that $\vect{x'} = {\He^{-1}} \vect{e'_i}$. Define $\vect{x} \coloneqq \L^{-1}\vect{x'}$. Then $\vect{x}$ satisfies the equation: $\He \L \vect{x} = \vect{e'_i}$. One may neglect the $n+k$ bottom most rows of $\He$ and of $\vect{e'_i}$ (not of $\L$, or of $\vect{x}$ - their rows must be maintained intact), which then gives us the equation $H^T \L \vect{x} = \vect{e_i}$. Let $g$ be the Pauli element whose $\Znn$ representative is $\vect{x}$. The commutation relations between $g$ and the $g_j$'s is then determined by the equation $H^T \L \vect{x} = \vect{e_i}$: we get that $g$ and $g_j$ commute when $j\neq i$ and $g_ig=\omega g g_i$. Hence proved.
\end{proof}
Let us obtain the solutions for $\vect{x}$ for all $i=1,2,\cdots,n-k$, and call the corresponding solutions $\vect{w'_i}$. We want $\vect{w'_i}^T\L\vect{w_j}=0$ for all $i,j=1,2,\cdots, n-k$. To that end, we perform a Gram-Schmidth orthogonalization procedure in the following way, starting with $\vect{w_1}\coloneqq \vect{w'_1}$, and starting with $i=2$ to $i=n-k$,
\begin{align}
    \label{eq:gm_schmidt}
    & \vect{w_i} \ \coloneqq \ \vect{w'_i} \ - \  \sum_{j=1}^{i-1} \left( \vect{w_j}^T \L \vect{w'_i}  \right) \vect{a_j}.
\end{align}
If $k=0$, our job is done: to extend $H$ to a $2n \times 2n$ symplectic matrix, we need to add the columns $\vect{w_i}$ to the left (or right) of $H$, and this will give us our symplectic matrix $\tanmayc{S}$. Suppose $k=1$, we need to find pairs of vectors $\vect{a_{n}}, \vect{w_{n}}$ so that 
\begin{align}
    \label{eq:extension1}
    & \vect{a_{n}}^T \Lambda  \vect{a_j} \ = \ 0, \ \forall  \ j=1,2,\cdots, n, \notag \\
    & \vect{w_{n}}^T \Lambda  \vect{w_j} \ = \ 0, \ \forall  \ j=1,2,\cdots, n, \notag \\
    & \vect{w_{n}}^T \Lambda  \vect{a_j} \ = \ \delta_{j,n}, \ \forall  \ j=1,2,\cdots, n.
\end{align}
At the moment, we have $2(n-1)$ linearly independent vectors $\vect{a_j}$ and $\vect{w_j}$. Let $\vect{a'}$ be some other vector so that the $\vect{a'}$, $\vect{a_j}$ and $\vect{w_j}$ form a set of $2n-1$ LI vectors in $\Znn$. Define $\vect{a_n} \coloneqq \vect{a'} -  \  \sum_{j=1}^{n-1} \left( \vect{w_j}^T \L \vect{a'}  \right) \vect{a_j}-  \  \sum_{j=1}^{n-1} \left( \vect{a_j}^T \L \vect{a'}  \right) \vect{w_j}$. Since the linear independence of $\vect{a'}$, $\vect{a_j}$ and $\vect{w_j}$ ensures that $\vect{a_n}$ is non-zero. We have also ensured that $\vect{a_n}$ satisfies the conditions $\vect{a_j}^T \L \vect{a_n} = \vect{w_j}^T \L \vect{a_n}=0$ for all $j=1,2,\cdots,n-1$. It remains to find $\vect{w_n}$ with the properties as desired by Eq. \eqref{eq:extension1}. Such a $\vect{w_n}$ may be found by using Proposition 10.3 again for an extended check matrix which is obtained by adding $\vect{a_n}$ to the original check matrix $H$. Thus we have obtained all the column vectors as desired, and the newly obtained column vectors may be arranged suitably to produce a $2n \times 2n$ symplectic matrix $\tanmayc{S}$. \newline 
If $k>1$, we may apply the same procedure as above iteratively, to obtain the symplectic matrix $\tanmayc{S}$.
%. To show this, our main idea is to find a symplectic matrix $S$ which transforms $H$ into the $(n+1)$-th to $(2n-k)$-th columns of an identity matrix. That is
%\begin{equation}
%SH = 
%\bigg[\begin{array}{c|c|c|c}
%\vect{e_{n+1}} &\vect{e_{n+2}} &\cdots &\vect{e_{2n-k}}
%\end{array}\bigg],
%\end{equation}
%where $\vect{e_i}$ is the $i$-th column of the $2n \times 2n$ identity matrix $I$. $S$ can be found using the procedure similar to Gaussian elimination, which transforms $H$ into part of the identity matrix column by column, and the transformations involved are all symplectic (See Sec. IV in Ref. \cite{Hostens2005}). Hence $S$ is a $2n \times 2n$ symplectic matrix. The inverse of $S$ is one of the symplectic realizations of $H$. This can simply be seen by the columns of $S^{-1}$. That is
%\begin{equation}
%M(H) = S^{-1} =  
%\bigg[\begin{array}{c|c|c|c}
%H_s & H_x & H & H_z 
%\end{array}\bigg],
%\end{equation}
%where
%$$ H_s = 
%S^{-1} 
%\bigg[\begin{array}{c|c|c}
%\vect{e_{1}} & ... & \vect{e_{k}}
%\end{array}\bigg],
%$$
%$$ H_x = 
%S^{-1} 
%\bigg[\begin{array}{c|c|c}
%\vect{e_{k+1}} & ... & \vect{e_{n}}
%\end{array}\bigg],
%$$
%$$ H_z = 
%S^{-1} 
%\bigg[\begin{array}{c|c|c}
%\vect{e_{n+k+1}} & ... & \vect{e_{2n}}
%\end{array}\bigg].
%$$
%The other symplectic realizations of $H$ would be $S^{-1}$ up to a transformation given by the elements in $\Tnkd$. We ensure that the symplectic realizations always exist when considering stabilizer codes defined over arbitrary $\mathbb{Z}_{d}$.
\section{Symplectic conditions for $M$}
\label{appendix:M_sycondition}
In this section of the Appendix, we derive the symplectic conditions Eq. (\ref{sycondition1})-(\ref{sycondition4}) for the matrix $M$ in Eq. \eqref{eq:M_matrix}, which has the form:
\begin{equation}
    \label{eq:M_matrix2}
M = 
\left[\begin{matrix}
M_{11}&M_{12}&0&M_{14}\\
M_{21}&M_{22}&0&M_{24}\\
M_{31}&M_{32}&A&M_{34}\\
M_{41}&M_{42}&0&M_{44}
\end{matrix}\right].
\end{equation}
Here $M_{11}$, $M_{31}$, $M_{13}$ (which is $0$) and $A$ are of dimensions $(n-k)\times(n-k)$, $M_{22}$, $M_{24}$, $M_{42}$ and $M_{44}$ are of dimensions $2k \times 2k$, $M_{21}$, $M_{41}$, $M_{23}$ and $M_{43}$ (which are both $0$) are of dimensions $k \times (n-k)$, and finally, $M_{12}$, $M_{14}$, $M_{32}$ and $M_{34}$ are of dimensions $(n-k)\times k$.  
$M$ satisfies the following symplectic condition.
\begin{equation}
\label{eq:sympcondition}
M^T\L M = \L.
\end{equation}
It will be convenient to label the LHS of Eq. \eqref{eq:sympcondition} as follows.
\begin{equation}\label{eq:Msymp_blocks}
M^T \L M \ = \ 
\left[\begin{matrix}
Q_{11}&Q_{12}&Q_{13}&Q_{14}\\
Q_{21}&Q_{22}&Q_{23}&Q_{24}\\
Q_{31}&Q_{32}&Q_{33}&Q_{34}\\
Q_{41}&Q_{42}&Q_{43}&Q_{44}\\
\end{matrix}\right].
\end{equation}
Here we only consider blocks $Q_{13}, Q_{23}$ and $Q_{43}$. That is the conditions given in Eq. (\ref{sycondition0_1}) and (\ref{sycondition0_2}):
\begin{equation}
\label{cond1}
\begin{aligned}
& Q_{13}=M^T_{11}A=I_{k} \Longrightarrow M_{11}=(A^{-1})^T, \\
& Q_{23}=M^T_{12}A=0 \Longrightarrow M_{12}=0, \\
& Q_{43}=M^T_{14}A=0 \Longrightarrow M_{14}=0. \\
\end{aligned}
\end{equation}
\tanmayc{Substituting} Eq. (\ref{cond1}) into Eq. (\ref{eq:M_matrix}) gives us Eq. \eqref{eq:M_form}. Now, Eq. \eqref{sycondition2}, Eq.\eqref{sycondition3} and Eq. \eqref{sycondition4} are merely the equations for the blocks $Q_{11}$, $Q_{12}$ and $Q_{14}$ respectively, whereas Eq. \eqref{sycondition1} is the collective condition on $\begin{psmallmatrix} Q_{22} & Q_{24} \\ Q_{42} & Q_{44} \end{psmallmatrix}$. 
%$$
%M^{-1} = \Lambda_{2n}M^T\Lambda_{2n} =
%\left[\begin{matrix}A^T&0&0&0\\
%M_{34}^T&M_{44}^T&0&-M_{24}^T\\
%-M_{31}^T&-M_{41}^T&A^{-1}&M_{21}^T\\
%-M_{32}^T&-M_{42}^T&0&M_{22}^T\end{matrix}\right].
%$$
%The symplectic condition of $M^{-1}$ is
%\begin{equation}
%(M^T)^{-1}\Lambda_{2n}M^{-1} \coloneqq 
%\left[\begin{matrix}
%N_{11}&N_{12}&N_{13}&N_{14}\\
%N_{21}&N_{22}&N_{23}&N_{24}\\
%N_{31}&N_{32}&N_{33}&N_{34}\\
%N_{41}&N_{42}&N_{43}&N_{44}\\
%\end{matrix}\right]
%\overset{!}{=} \Lambda_{2n},
%\end{equation}
%where
%\begin{equation}
%\begin{aligned}
%& N_{11}=(AM_{31}^T+M_{34}M_{32}^T)^T-(AM_{31}^T+M_{34}M_{32}^T)=0, \\
%& N_{12}=M_{32}M_{44}^T-AM_{41}^T-M_{34}M_{42}^T=0, \\
%& N_{13}=AA^{-1}=I_{k}, \;\; \text{(trivial)}\\
%& N_{14}=-M_{32}M_{24}^T+AM_{21}^T+M_{34}M_{22}^T=0, \\
%& N_{21}=-N^T_{12}=0, \\
%& N_{22}=M_{42}M^T_{44}-M_{44}M^T_{42}=0, \\
%& N_{23}=N_{32}=N_{33}=N_{34}=N_{43}=0, \;\; \text{(trivial)}\\
%& N_{24}=-M_{42}M^T_{24}+M_{44}M^T_{22}=I_{n-k}, \\
%& N_{31}=-I_{k}, \;\; \text{(trivial)}\\
%& N_{41}=-N^T_{14}=0, \\
%& N_{42}=-M_{22}M^T_{44}+M_{24}M^T_{42}=-I_{n-k}, \\
%& N_{44}=M_{22}M^T_{24}-M_{24}M^T_{22}=0.
%\end{aligned}
%\end{equation}
%Thus the symplectic condition of $N_{11}$ gives Eq. (\ref{sycondition4}); the %symplectic condition of $N_{12}$ gives Eq. (\ref{sycondition2}); the symplectic %condition of $N_{14}$ gives Eq. (\ref{sycondition3}), and the symplectic %conditions for $N_{22}, N_{24}, N_{42}$ and $N_{44}$ gives Eq. %(\ref{sycondition1}). Here we have considered all the non-trivial conditions.%%
\section{Significance of the subgroups of $\Tnkd$: $\Spnd{2k}{d}$, $\GLnd{n-k}$, $\BA$ and $\BS$}
\label{app:sec:significance}
We now explain the significance of each of these subgroups. For an $[[n,k]]_d$ stabilizer code, let $H$ be the corresponding check matrix and let a symplectic extension of $H$ be given by $M$ as in Eq.\eqref{eq:M_matrix}. One may obtain another symplectic matrix of the same stabilizer code by right-multiplying $M$ with a matrix in $\Tnkd$ with a decomposition $M_EM_SM_L$. Suppose that $M_E=M_L=I_{2n}$. Right-multiplying $M$ with $M_S$ transforms the check matrix $H$ to another check-matrix $HA$ (of the same code). Thus the $\GLnd{n-k}$ subgroup realises the freedom of making different choices of the independent generators of said stabilizer group. Right-multiplying $M$ with $M_L$ transforms the columns in $L_X$ and $L_Z$, while leaving $E$ and $H$ invariant. Thus the $\Spnd{n-k}{d}$ subgroup realises the freedom of  making different choices of the generators of the {\it purely} logical Pauli group, which we denote by $\Pl$. Each element of this subgroup (except the identity) will act non-trivially on the stabilizer code, i.e., codewords are mapped to other codewords under the action of these elements. Thus the {\it purely} logical Pauli group is a subgroup of the logical Pauli group $N(S)$, which contains the stabilizer group $S$. One anticipates that the choice of generators for $S$ (which is realised by the $\GLnd{n-k}$ subgroup) is independent of the choice of generators of the {\it purely} logical Pauli group (which is realised by the $\Spnd{k}{d}$ subgroup). This agrees with the fact that both subgroups commute with each other. \newline We next explain the significance of $M_E$. Let $g_E$ be a Pauli which anti-commutes with some generators of $S$. Consider $g_E \Pl$ which is a left coset of $\Pl$. Each element $g_Eg_L$ rotates $\CS$ to the same orthogonal coding space $\CS^{\perp}$. But for different choices of $g_L$, $g_E g_L$ performs a distinct internal rotation {\it within} $\CS^{\perp}$. The error correction procedure is unable to distinguish between different internal rotations due to different $g_L$. Thus it can correct only one element in the coset $g_E \Pl$. This is true for all $g_E$'s which anticommute with elements in $S$. Which element within $g_E \Pl$ is {\it chosen} to be the correctable error depends on the error correction model. We next prove the following.
\begin{thm}
\label{thm:E}
The columns $E$ in $M$ in Eq. \eqref{eq:M_matrix} represent correctable errors. Right multiplying $M$ with different $M_E \in \BA$ realises the freedom in making different choices of such correctable errors. The action of $M_T \in \BS$ doesn't affect this choice. 
\end{thm}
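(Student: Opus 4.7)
The plan is to attack the theorem in three stages, corresponding to its three assertions. First I would use the symplectic condition $M^T \L M = \L$ on the matrix $M$ of Eq.~\eqref{eq:M_matrix} to extract the commutation relations between the Paulis represented by the columns of $E$ and those of $H$, $L_X$, $L_Z$. Writing $\vect{e_i}$, $\vect{a_j}$, $\vect{l}_j^X$, $\vect{l}_j^Z$ for these columns, the symplectic identity forces
\begin{equation}
\vect{e_i}^T \L \vect{a_j} = \delta_{ij}, \qquad \vect{e_i}^T \L \vect{e_j} = 0, \qquad \vect{e_i}^T \L \vect{l}_j^X = \vect{e_i}^T \L \vect{l}_j^Z = 0.
\end{equation}
By Eq.~\eqref{eq:commutation2}, $g(\vect{e_i})$ anticommutes (with phase $\omega$) with $g_i = g(\vect{a_i})$ and commutes with every other stabiliser generator as well as with every logical Pauli. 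Consequently the $d^{n-k}$ products $\prod_i g(\vect{e_i})^{x_i}$, indexed by $(x_1,\ldots,x_{n-k}) \in \Z^{n-k}$, realise each of the $d^{n-k}$ distinct error syndromes exactly once and do so without disturbing the purely logical Pauli group $\Pl$. Combined with the discussion preceding the theorem, in which only a single element from each coset $g_E \Pl$ can be chosen as the correctable error, this identifies the columns of $E$ as a complete set of correctable error representatives.

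For the second assertion, I would compute the effect on $M$ of right-multiplication by $M_E$ from Eq.~\eqref{eq:M_E}. A block-by-block calculation shows that $H$ is left strictly invariant, while the columns of $L_X$ and $L_Z$ change only by $\Z$-linear combinations of stabiliser columns of $H$; thus, as elements of the logical Pauli quotient $\mathcal{N}(S)/S$, the logical Paulis are unchanged. Meanwhile the $E$ block transforms to
\begin{equation}
E' \;=\; E + L_X N + H K_A + L_Z L,
\end{equation}
with $N, K_A, L$ the data of $M_E$ per Eq.~\eqref{eq:NKL}. Reading each column of $E'$ back through the Pauli homomorphism, $g(\vect{e'_i})$ equals $g(\vect{e_i})$ multiplied (up to an overall phase) by a stabiliser element (from $H K_A$, which acts trivially on every code space) and by a logical Pauli (from $L_X N + L_Z L$). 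Hence $E'$ still supplies one correctable representative per syndrome coset but generically a different representative from the one supplied by $E$. Letting $N, L$ range over all $k \times (n-k)$ matrices over $\Z$ and setting $K_A = N^T L - L^T N$ as required by $\BA$, these shifts exhaust the full freedom of coset-representative choice.

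For the third assertion, the analogous block multiplication with $M_T$ from Eq.~\eqref{eqM_T} leaves $L_X$, $H$, $L_Z$ all strictly invariant and yields
\begin{equation}
E'' \;=\; E + H K_S.
\end{equation}
Each column of $E''$ thus differs from the corresponding column of $E$ only by a $\Z$-linear combination of stabiliser columns, so $g(\vect{e''_i}) = g(\vect{e_i})\, s_i$ for some $s_i \in S$, which acts identically to $g(\vect{e_i})$ on every code space and represents the same correctable error. The main obstacle I anticipate is Part 2: verifying that the shifts induced by $L_X N + L_Z L$ actually reach every element of $\Pl$ as $M_E$ ranges over $\BA$, and that the antisymmetry constraint $K_A = N^T L - L^T N$ is precisely what preserves the symplectic structure without restricting the genuine error-choice freedom. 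Once that bijection between $\BA$ and the correctable-error coset-representative choices is in hand, the three parts of the theorem assemble directly.
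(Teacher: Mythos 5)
Your proposal is correct and follows essentially the same route as the paper: parts two and three are the paper's argument verbatim ($E \mapsto E + L_XN + L_ZL$ modulo stabiliser factors, with $N,L$ unconstrained and $K_A$ forced by the symplectic condition, and $E \mapsto E + HK_S$ contributing only trivially-acting stabiliser elements). The only differences are that the paper simply cites Ref.~\cite{Nielsen} and Ref.~\cite{Poulain} for your first step, which you instead derive explicitly from $M^T\L M = \L$, and that you correctly keep the $HK_A$ term in the product $MM_E$ that the paper's Eq.~\eqref{eq:Eprime} suppresses --- both are harmless refinements rather than a different approach.
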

\begin{proof}
That the columns of $E$ fix the choice of the correctable errors of the stabilizer code is well-known in the stabilizer formalism for quantum error correction (see Chapter 10 in \cite{Nielsen} for instance or see the beginning in Section III in \cite{Poulain}). What we need to prove is that for each distinct choice of correctable errors, there is a choice of $M_E \in \BA$ such that right multiplying $M$ with $M_E$ produces the corresponding column block of correctable errors in $M M_E$. To this end, suppose that the columns in $E$ are homomorphic images of $g_1$, $g_2$ $\cdots$, $g_k \in \Pn$ (see Eq. \eqref{eq:arbitrary_n_qudit_pauli_group}). Suppose that we wish for $g_1h_1$, $g_2h_2$, $\cdots$, $g_kh_k$ to be correctable errors instead, where $h_1, h_2 \cdots, h_k \in \Pl$. These errors may be represented in $\Znn$ by the $2n \times k$ matrix 
\begin{equation}
    \label{eq:Eprime}
    E' \ = \ E + L_XN' + L_ZL', 
\end{equation}
where the $j$-th columns of $N'$ and $L'$ are determined by how $h_j$ decomposes into the generators of $\Pl$. Thus $N'$ and $L'$ are determined entirely by $h_j$'s. Equation \eqref{eq:Eprime} may be realised by right multiplying $M$ with a matrix of the form $M_E$ with $N'=N$ and $L'=L$, and $K_A$ determined entirely by $N$ and $L$ as follows: $K_A = L^TN-N^TL$. This is just Eq. \eqref{sycondition2}. Additionally, right multiplying $M$ by any $M_T \in \BS$ changes the columns in $E$ to $E' = E + HK_S$. If $g_{E'}$ and $g_E$ represent the $j$-th columns in $E'$ and $E$, then $g_{E'}=g_Eg_S$, where $g_S \in S$ is some element of the stabilizer group. $g_S$ is determined by the $j$-th column of $K_S$ and the choice of generators in $H$. Note that the error operators $g_{E'}$ and $g_E$ are both simultaneously correctable since the stabilizer element $g_S$ acts trivially on the code. Hence proved.
\end{proof}
\subsection{Order of $\Spnd{2n}{p^m}$}
\label{app:sec_sp_order}
A computation of the order of $\Spnd{2n}{p^m}$ is given in Ref. \cite{hattice}. For the interested reader, we reproduce that computation here, while elaborating on some of the definitions and computations to make them clearer.
 Define a ring homomorphism $\psi_1: \ \Zj{m} \rightarrow \Zj{m-1}$ as
\begin{equation}
\label{eq:psi1}
\psi_1(x) = x \: \: \text{mod} \: \: p^{m-1},
\end{equation}
where $x \in \Zj{m}$. The following remark explains Eq. \eqref{eq:psi1}. \begin{rem}
\label{rem:exp}
The ideal generated by $p^{m-1}$ within $\Zj{m}$ is isomorphic to $\Zj{}$. We label this ideal as $p^{m-1} \Zj{m}$. The factor ring of $\Zj{m}/ p^{m-1}\Zj{m}$ is isomorphic to $\Zj{m-1}$. $\psi_1$ is the map from $\Zj{m} \rightarrow \Zj{m}/p^{m-1}\Zj{m}$. $\psi_1$ is a ring homomorphism, and $\ker \psi_1 = p^{m-1} \Zj{m}$. 
\end{rem} We define a group homomorphism from $\Spnd{2n}{p^m} \rightarrow \Spnd{2n}{p^{m-1}}$ as follows. Let $M$ be a $2n \times 2n$ symplectic matrix.  Define $\Xi_1: \Spnd{2n}{p^m} \rightarrow \Spnd{2m}{p^{m-1}}$ as follows: the $ij$-th matrix element of $\Xi_1(M)$ is $\psi_1 \left( M_{ij} \right)$. Since $M \in \Spnd{2n}{p^m}$, it satisfies the equation $M^T \L M = \L$ (where $\L \in \Spnd{2n}{p^m}$), then \begin{equation}
    \label{eq:psi_symplectic}
   \left(  \Xi_1(M) \right)^T \ \L \ \Xi_1(M) \ = \ \L, \ \mathrm{where} \ \L \in \Spnd{2n}{p^{m-1}},
\end{equation} which implies that $\Xi_1(M) \in \Spnd{2n}{p^{m-1}}$. That $\psi_1$ is a ring homomorphism immediately implies that $\Xi_1$ is group homomorphism from $\Spnd{2n}{p^m}$ to $\Spnd{2n}{p^{m-1}}$. $\ker \Xi_1$ is the subgroup in $\Spnd{2n}{p^m}$ which $\Xi_1$ maps to  $I_{2n} \in \Spnd{2n}{p^{m-1}}$. Note that $I_{2n}$ (in $ \Spnd{2n}{p^m}$) lies in $\ker \Xi_1$. Thus any $M \in \ker \Xi_1$ can be written in the form $I_{2n}+ K$, where $K$ is a $2n\times 2n $ matrix with matrix elements in $p^{m-1}\Zj{m}$, and satisfies the symplectic equation
\begin{align}
\label{eq:symplectic_K}
 & \left( I_{2n} + K \right)^T \ \L \ \left( I_{2n} + K \right) \ = \ \L, \notag \\ 
 \Rightarrow  \ & \ K^T \L \ + \ \L K \ + \ K^T \L K \ = 0.
\end{align}
Note that $K^T \L K = 0$ since $K$ can be written as $K= p^{m-1} K'$, with $K' \in \left\{0,1,\cdots, p-1 \right\} \subset \Zj{m}$, which gives us $K^T \L K = p^m \ p^{m-2} K'^T \L K'$ which is $0$ since $p^m =0$ in $\Zj{m}$. Thus $K$ has to satisfy the equation $\L K = -K^T \L$. Decomposing $K$ into $n \times n$ blocks as follows:
\begin{equation}
    \label{eq:K_block}
K = 
\begin{bmatrix}
X&Y\\
Z&W
\end{bmatrix},
\end{equation} we see that the symplectic condition becomes $X=-W^T$, $Y=Y^T$ and $Z=Z^T$. Thus $K$ has $2n^2 + n$ unconstrained matrix elements which take values in $p^{m-1} \Zj{m}$, which has $p$ elements. Thus $\left| \ \ker \Xi_1 \  \right| \ = \ p^{2n^2+n}$ and we get \begin{equation}\label{eq:order_Sp_1} 
\left| \ \Spnd{2n}{p^m} \  \right| \ = \ p^{2n^2+n} \   \left| \ \Spnd{2n}{p^{m-1}} \  \right|.\end{equation} Similarly one may define $\psi_j:\Zj{m-j+1} \rightarrow \Zj{m-j}$ and $\Xi_j:\Spnd{2n}{p^{m-j+1}} \rightarrow \Spnd{2n}{p^{m-j}}$, with $\left| \ \ker \Xi_2 \  \right| \ = \ p^{2n^2+n}$ for $j=2,3,\cdots, m-1$, and \begin{equation}\label{eq:order_Sp_j}
\left| \ \Spnd{2n}{p^{m-j+1}} \  \right| \ = \ p^{2n^2+n} \   \left| \ \Spnd{2n}{p^{m-j}} \  \right|,\end{equation} which finally gives us
\begin{equation}\label{eq:order_Sp}
\left| \ \Spnd{2n}{p^{m}} \  \right| \ = \ p^{(m-1)(2n^2+n)} \   \left| \ \Spnd{2n}{p} \  \right|.\end{equation} It is known (see Ref. \cite{Grove}) that $\left| \Spnd{2n}{p} \right| \ = \ p^{n^2} \prod_{j=1}^{n} \left(p^{2j}-1 \right)$, which tells us that 
\begin{equation}
    \label{eq:order_sp}
    \left| \ \Spnd{2n}{p^m} \  \right|  \ = \ p^{(2m-1)n^2+(m-1)n} \ \prod_{j=1}^{n} \left( p^{2j} -1 \right).
\end{equation}
\tanmayc{
\section{Proof that the Chinese remainder theorem implies  Eq. \eqref{eq:Chinese_remainder_theorem_cor}}
\label{app:chinese_remainder_thm}
\subsection{What is the Chinese remainder theorem?}
\label{subsec:CRT_explanation} For a reader, who may be unfamiliar with the Chinese remainder theorem (CRT), we briefly explain it in this subsection. \newline ~ \newline 
We use the following abbreviation for $\mathbb{Z}_{p_1}^{m_1} \times \mathbb{Z}_{p_2}^{m_2} \times \cdots \times \mathbb{Z}_{p_r}^{m_r}$. \begin{align}\label{eq:CRT_K} & K \equiv \mathbb{Z}_{p_1}^{m_1} \times \mathbb{Z}_{p_2}^{m_2} \times \cdots \times \mathbb{Z}_{p_r}^{m_r}. \end{align} Since $K$ is a direct product of rings $\mathbb{Z}_{p_i^{m_i}}$, addition and multiplication are defined component-wise. Let $x_i,y_i \in \mathbb{Z}_{p_i^{m_i}}$ for all $i=1,2,\cdots,r$.
\begin{align}
\label{eq:K_arithmetic_plus}
& (x_1, x_2, \cdots, x_r) + (y_1, y_2, \cdots, y_r) \ = \ (x_1 + y_1, x_2 + y_2, \cdots, x_r + y_r),\ \mathrm{and} \\ 
\label{eq:K_arithmetic_times}
& (x_1, x_2, \cdots, x_r) \ \  . \  (y_1, y_2, \cdots, y_r) \ = \ (x_1 . y_1, \  x_2 . y_2, \  \cdots, \  x_r . y_r \ ). 
\end{align}%\\ \label{eq:CRT_L} & L \equiv  \Mtn{1} \times \Mtn{2} \times \cdots \times \Mtn{r}, \ \mathrm{and} \\ \label{eq:direct_prod_sp} &
%G \ \equiv \  \Spnd{2n}{p_1^{m_1}} \times \Spnd{2n}{p_2^{m_2}} \times \cdots \times  \Spnd{2n}{p_r^{m_r}} .\end{align} Note that $L$ is a direct product of the matrix rings $\Mtn{1}$, $\Mtn{2}$, $\cdots$, $\Mtn{r}$, thus addition and multiplication in $L$ are defined component-wise. \newline ~ \newline
CRT, as expressed by Eq. \eqref{eq:Chinese_remainder_theorem}, establishes a ring isomorphism, which we call $\Gamma$. $\Gamma(x)$ for any $x \in \Z$ is computed as follows.
\begin{align}
\label{eq:gamma}
& \Gamma: \Z \rightarrow  K, \notag \\
& \Gamma (x) \ = \ \left( x \mod p_1^{m_1}, \ x \mod p_2^{m_2}, \cdots , x \mod p_r^{m_r} \right), 
\end{align} which we abbreviate as $\Gamma(x) = \left( x_i\right)_i$, where $x_i \equiv x \mod p_i^{m_i}$. CRT states the following.
\begin{enumerate}
\item $\Gamma$, as defined in Eq. \eqref{eq:gamma}, is a bijection from $\Z$ to $K$. 
\item Arithmetic of pre-images (in $\Z$) is isomorphic under $\Gamma$, to arithmetic of the images (in $K$). That is, for $x,y \in \Z$, and $\Gamma(x)=(x_i)_i$ and $\Gamma(y)=(y_i)_i$, we have
\begin{align}
\label{eq:CRT_add}
&\Gamma(x+y) \ = \ (x_i + y_i)_i, \ \mathrm{and} \\
\label{eq:CRT_multiply}
&\Gamma(x.y) \ = \ (x_i .y_i)_i,
\end{align} where $x_i + y_i$ and $x_i y_i$ are the addition and multiplication operations, carried out in $\mathbb{Z}_{p_i^{m_i}}$.
\end{enumerate} 
\subsection{Extending CRT to matrices}
\label{subsec:CRT_matrix} Next, consider the following direct product of matrix rings $\Mtn{i}$ for $i=1$ to $i=r$. \begin{equation}
\label{eq:CRT_L}  L \equiv  \Mtn{1} \times \Mtn{2} \times \cdots \times \Mtn{r} .\end{equation} Addition and multiplication in $L$ are performed component-wise. \begin{align}\label{eq:arithmetic_L_plus} & (S_1, S_2, \cdots, S_r) + (T_1, T_2, \cdots, T_r) \ = \ (S_1 +T_1, S_2 + T_2, \cdots, S_r+T_r), \ \mathrm{and} \\ \label{eq:arithmetic_L_times}   & (S_1, S_2, \cdots, S_r) . (T_1, T_2, \cdots, T_r) \ = \ (S_1.T_1, S_2 . T_2, \cdots, S_r.T_r),  \end{align} where $S_i, T_i \in \Mtn{i}$. \newline ~ \newline We may extend CRT to derive a ring isomorphism $\GM: \M \rightarrow L$ as follows. Let $S\in\M$. Then
\begin{equation}
\label{eq:def_GM}
 \GM(S) \ = \  \left( S \mod p_1^{m_1},S \mod p_2^{m_2},\cdots,S \mod p_r^{m_r} \right),
\end{equation} where by $S \mod p_i^{m_i}$ we mean that matrix in $\Mtn{i}$ whose $jk$-th matrix element is $S_{jk} \mod p_i^{m_i}$. We will use the abbreviation $\left( S \mod p_1^{m_1},S \mod p_2^{m_2},\cdots,S \mod p_r^{m_r} \right) = (S \mod p_i^{m_i})_i$ or simply as $(S_i)_i$ whenever convenient. That $\GM$ is a ring isomorphism, is seen from the following.
\begin{enumerate}
\item $\GM$ is a bijection because, firstly, $\left| \M \right| \ = \ \Pi_{i=1}^r \left( p_i^{m_i}  \right)^{2n} \  =  \  \left| L \right| $, since $d= \Pi_{i=1}^{r} p_i^{m_i}$. Secondly, every element in $L$ has a pre-image in $\M$. 
\item Matrix addition and multiplication in $\M$ is homomorphic, under $\GM$, to addition and multiplication defined in Eq. \eqref{eq:arithmetic_L_plus} and Eq. \eqref{eq:arithmetic_L_times} respectively. Let $S,T \in \M$. For addition: 
\begin{equation}
\label{eq:GM_plus}
\GM(S+T) = \left( ( S+T) \mod p_i^{m_i} \right)_i \ = \ \left (  S \mod p_i^{m_i}  \ + \  T \mod p_i^{m_i}  \  \right)_i \ = \ \GM(S) + \GM(T),
\end{equation} where we used Eq. \eqref{eq:arithmetic_L_plus} to obtain the right-most expression from the previous expression. For multiplication:
\begin{equation}
\label{eq:GM_times}
\GM(S.T) = \left( ( S.T) \mod p_i^{m_i} \right)_i \ = \ \left ( \left( S \mod p_i^{m_i} \right) \ . \ \left( T \mod p_i^{m_i} \right) \  \right)_i \ = \ \GM(S) . \GM(T),
\end{equation} where we used Eq. \eqref{eq:arithmetic_L_times} to obtain the right-most expression from the previous expression.
\end{enumerate} %In particular, note that \begin{equation}\label{eq:GM_unity}
%\GM(\unity) \ = \ \left(  \unity_i \right)_i,
%\end{equation} where $\unity$ in the LHS is the multiplicative identity in $\M$, and $\unity_i$ in the RHS is the multiplicative identity in $\Mtn{i}$ for all $i$. 
\subsection{Restricting $\GM$ to $\Sp$}
\label{subsec:GM_Sp}
Consider the following scenario. 
\begin{itemize}
\item We restrict $\GM$ to acting on elements in $\Sp$. 
\item Additionally, we restrict matrix operations to multiplication \emph{only}. We dispense with addition.
\end{itemize}
$\Sp$ is a group. That $\GM$ is a bijection from $\Sp$ to $\GM(\Sp)$, and that $\GM$ satisfies Eq. \eqref{eq:GM_times} implies that $\GM(\Sp)$ must also be a group, and moreover that $\Sp$ and $\GM(\Sp)$ are isomorphic. We now prove that for any $S \in \Sp$, $\GM(S) \in G$, where 
\begin{equation}
\label{eq:direct_prod_sp} 
G \ \equiv \  \Spnd{2n}{p_1^{m_1}} \times \Spnd{2n}{p_2^{m_2}} \times \cdots \times  \Spnd{2n}{p_r^{m_r}}.
\end{equation} 
%Let $S\in\M$, with $S_{ij} \in \Z$, then $\GM(S) \in \GaM$, whose $ij$-th matrix element is $\Gamma(S_{ij})$. Eq. \eqref{eq:Chinese_remainder_theorem} implies that $\GM:\M \rightarrow \GaM $ is a (matrix) ring isomorphism. This entails the following for $S,S' \in \M$:
%\begin{itemize}
%\item[(i)] $\GM(S)=0_{2n}$ implies $S = 0_{2n}$.
%\item[(ii)] $\GM \left(  S + S' \right) =  \GM (S) +  \GM (S')$. 
%\item[(iii)]  $  \GM(SS') \ = \ \GM(S).\GM(S')  $.
%\item[(iv)]  $\GM(\unity) = \unity$, where the  $\unity$ on the LHS is the multiplicative identity in $\M$, and the $\unity$ on the RHS is the multiplicative identity in $\GaM$.
%\end{itemize} ~   There is a canonical ring isomorphism from $\GaM$ to $L$ as follows: $\GM(S) \rightarrow \left( S \mod p_1^{m_1},S \mod p_2^{m_2},\cdots,S \mod p_r^{m_r} \right)$, which we will abbreviate simply as $(S_i)_i$. Thus from here onwards, we may assume that $\Gamma$ is a ring isomorphism from $\M$ to $L$ directly. \newline  Since $\Gamma$ is a ring isomorphism, $\Gamma$ maps $\Sp$ and $\Tnk{d}$ \emph{bijectively} onto their respective images. Let $S \in \Sp$. We will show that $\Gamma(S) \in G$ (see Eq. \eqref{eq:direct_prod_sp} for the definition of $G$). This, together with point (iii) listed above, and the fact that $\Gamma$ maps bijectively from $\Sp$ to $G$, proves that $\Gamma:\Sp \rightarrow G$ is a group isomorphism, which is the first equation of Eq. \eqref{eq:Chinese_remainder_theorem_cor}. To prove that $\Gamma(S) \in G$, we need only show that $S_i \in \Spnd{2n}{{p_i^{m_i}}}$. \newline
Since $S \in \Sp$ it satisfies the symplectic equation: $S^T \L S \ = \ \L$. Apply $\GM$ to both sides of the symplectic equation: $\GM\left(S^T \L S \right)  \ = \ \GM(\L)$. Since $\GM (\L) \ = \ (\L_i)_i$, we get that  
\begin{equation}
\label{eq:symp_mod}
\left( \left( S^T \L S \right)_i \right)_i \   =  \ \left(  S_i^T \L_i S_i \right)_i = \left( \L_i \right)_i. \end{equation} Note that $\L_i$ takes the same form as $\L$ from Eq. \eqref{eq:Lambda}, but the $1$'s and $0$'s in $\L_i$ play the role of the multiplicative and additive identities in $\mathbb{Z}_{p_i^{m_i}}$. Eq. \eqref{eq:symp_mod} implies that $S_i \in \Spnd{2n}{p_i^{m_i}}$, which proves that $\GM (S) \in G$. This proves the first isomorphism in Eq. \eqref{eq:Chinese_remainder_theorem_cor}. 
\subsection{Restricting $\GM$ to $\Tnkd$} 
\label{subsec:GM_tnkd} Further restricting the action of $\GM$ to only $\Tnkd$, similar arguments as above inform us that $\GM(\Tnkd)$ is a group, and moreover, $\GM(\Tnkd)$ is isomorphic to $\Tnkd$. We prove now that for any $S \in \Tnkd$, $\GM(S) \in H$, where 
\begin{equation}
\label{eq:GM_H}
H \ \coloneqq \  \Tnk{p_1^{m_1}} \times \Tnk{p_2^{m_2}} \times \ldots   \times \Tnk{p_r^{m_r}}. \end{equation} That $S \in \Tnkd$ entails two things (see Thm.~\ref{tnk}): (i) $S \in \Sp$ and (ii) $S$ has the form of Eq. \eqref{eq:M_matrix_block}, i.e., the matrix blocks $S_{13}=0_{n-k}$, $S_{23}=S_{43}=0_{k\times (n-k)}$. From Subsection \ref{subsec:GM_Sp} we learn that when $S \in \Sp$, then $\GM(S) \in G$. And from Eq. \eqref{eq:def_GM} we see that $S \mod p_i^{m_i}$ will also have the form Eq. \eqref{eq:M_matrix_block}, i.e., the matrix blocks $\left(S_i\right)_{13}=0_{n-k}$, $\left(S_i\right)_{23}=\left(S_i\right)_{43}=0_{k\times(n-k)}$. This proves that $S_i \in \Tnk{p_i^{m_i}}$ for all $i=1,2,\cdots, r$, which further proves that $\GM(S) \in H$. Hence the second isomorphism in Eq. \eqref{eq:Chinese_remainder_theorem_cor} is proved. %Note that for any $S \in \Sp$, to also be in $\Tnkd$, it suffices for $S$ to take the form in Eq. \eqref{eq:M_matrix_block}. This is because the symplectic nature of $S$ ensures that it satisfies the remaining conditions from Eq. \eqref{sycondition0_1} to Eq. \eqref{sycondition4} (see Thm.~\ref{tnk}). Since $S \in \Sp$, this implies that $\Gamma(S) \in G$, which implies that $S_i \in \Spnd{2n}{p_i^{m_i}}$. Furthermore if $S$ is of the form Eq. \eqref{eq:M_matrix_block}, then the ring isomorphism from $\Z \rightarrow K$ ensures that $\Gamma(S)$ will also be of the form of Eq. \eqref{eq:M_matrix_block}. This implies that $S_i \in \Tnk{p_i^{m_i}}$. Thus we have proved that when $S \in \Tnkd$, $\Gamma(S) \in \Tnk{p_1^{m_1}} \times \Tnk{p_2^{m_2}} \times \Tnk{p_r^{m_r}}$. Using similar arguments as before, it is proved that $\Gamma: \Tnkd \rightarrow \Tnk{p_1^{m_1}} \times \Tnk{p_2^{m_2}} \times \Tnk{p_r^{m_r}}$ is a group isomorphism. 
}
\section{Proof of Eq. \eqref{eq:bound_prod}}
\label{app:sec:inequality}
We start by noting that
\begin{equation}
\label{eq:inequality1}
  1 \ \le \ \dfrac{1-{p_i}^{-2(n-j)}}{1-{p_i}^{-(n-k-j)}},
\end{equation} we get $ \left( p_i^{m_i} \right)^{\frac{(n-k)\left(n+3k+1\right)}{2}} \ \le \nStab{n}{k}{p_i^{m_i}}$ and using Eq. \eqref{eq:count_stab2} we immediately get  $d^{\frac{(n-k)\left(n+3k+1\right)}{2}} \ \le \nStab{n}{k}{d}$. For the inequality on the right, we note that $ \dfrac{1-p^{-2(n-j)}}{1-p^{-(n-k-j)}} \le \dfrac{1}{1-p^{-(n-k-j)}} \le  \dfrac{1}{1-2^{-(n-k-j)}} $, since $p \ge 2$. Thus 
\begin{equation}
    \label{eq:frac1}
 \prod_{j=0}^{n-k-1}   \dfrac{1-p^{-2(n-j)}}{1-p^{-(n-k-j)}} \ \le \ \prod_{j=0}^{n-k-1} \ \left(1 + \dfrac{2^{-(n-k-j)}}{1-2^{-(n-k-j)}}\right) \ \le \prod_{j=0}^{n-k-1} \ \left(1 + 2^{-j}\right),
\end{equation} where we used the fact that $\frac{2^{-(n-k-j)}}{1-2^{-(n-k-j)}} \le 2^{-(n-k-1-j)}$, and re-label $j\rightarrow n-k-1-j$.  and using \begin{equation} \label{eq:trick2} \prod_{j=0}^{n-k-1} \left( 1+ 2^{-j} \right) \ \le \prod_{j=0}^{\infty}  \left( 1+ 2^{-j} \right) \ = \ \phi\left( \frac{1}{2} \right),  \end{equation} where $\phi(.)$ is the Euler function\footnote{Not the Euler totient function} \cite{Apostol}, which is a special case of the $q$-Pochhammer function \cite{q_series}. Using Mathematica, this evaluates to $4.768$ or $e^{1.56}$.
\section{Why the proof of Theorem 20. of Ref. \cite{Gross2006} fails for counting $[[n,k]]_d$ modular stabilizer codes}
\label{app:sec:Gross_proof} 
We first give the  context of Theorem 20 in Ref. \cite{Gross2006}: $d=p^m$, for some positive prime integer $p$, and some positive integer $m$, and the vector space in question is $\mathbb{F}_{p^m}^{2n}$, which is defined over the field $\mathbb{F}_{p^m}$. Here $\mathbb{F}_{p^m}$ is the Galois field extension of the base field $\mathbb{F}_p \simeq \mathbb{Z}_p$. Thus this overlaps with our scenario only when $d=p$. The proof counts the number of isotropic subspaces of $\mathbb{F}_{p^m}^{2n}$. An isotropic subspace is spanned by $n-k$ linearly independent vectors $\vect{a}_1$, $\vect{a}_2$, $\cdots$, $\vect{a}_{n-k}$, with the additional property that the symplectic inner products between any pair of these vectors is zero, i.e., $\vect{a}_i^T \L \vect{a}_j=0$, for all $i,j=1,2,\cdots,n-k$. The proof in Ref. \cite{Gross2006} begins by counting the number of possible values that $\vect{a}_1$ can take in $\mathbb{F}_{p^m}^{2n}$. Since any non-zero vector in $\mathbb{F}_{p^m}^{2n}$ is a potential candidate, the count is $d^{2n}-1$. Subsequently it counts the possible values which $\vect{a}_2$ can be in $\mathbb{F}_{p^m}^{2n} \setminus \mathrm{span} \left\{ \vect{a}_1 \right\}$. Note that $\vect{a}_2$ has to satisfy the additional condition that $\vect{a}_1^T \L \vect{a}_2= 0$. Viewing the transformation $\vect{a}_2 \rightarrow \vect{a}_1^T \L \vect{a}_2$ as a linear map on $\mathbb{F}_{p^m}^{2n}$, one notes that the symplectic condition simply demands that $\vect{a}_2$ belongs to the kernel of this linear map. The rank-nullity theorem is then used to tell us that $\vect{a}_2$ belongs to a $2n-1$ dimensional subspace of $\mathbb{F}_{p^m}^{2n}$. But since $\vect{a}_1$ also belongs to the kernel, and since $\vect{a}_2$ and $\vect{a}_1$ need to be linearly independent, we need to rule out the possibility that $\vect{a}_2 = x \vect{a}_1$ for all $x \in \mathbb{F}_{p^m}$. This gives us the count for $\vect{a}_2$. The same technique is applied iteratively to give us counts for all $\vect{a}_j$ for $j=3, \cdots, n-k$. The product of these counts gives the number of distinct ordered sets of $n-k$ ``symplectic" orthogonal vectors. To obtain the number of isotropic subspaces, one needs to divide the aforementioned product by the redundancy with which each subspace is counted. It is easily seen that this redundancy is equal to the number of distinct ordered ``symplectic'' orthogonal basis any isotropic subspace has. \newline 
Coming to our scenario, we note that the first step of the proof of Theorem 20 in Ref. \cite{Gross2006} itself is incompatible with our context, since in our case we need to rule out all the non-zero linearly \emph{dependent} vectors for $\vect{a}_1$, i.e., non-zero vectors $\vect{a}_1$ which satisfy the equation $x \vect{a}_1=0$, for a non-zero $x$. When $d$ is not prime, such vectors exist. For example consider the vector $\vect{a}=(2,0,0,0,0,0)^T$ for $d=4$, $n=3$ and $k=2$. $2 \vect{a}=(4,0,0,0,0,0)^T=0$ in $\mathbb{Z}_4$. If one still manages to weed out those non-zero vectors for $\vect{a}_1$, one still encounters similar issues for counting $\vect{a}_j$ for $ j \ge 2$. According to Lemma \ref{lem:count} the total number of ordered linear independent vectors $\vect{a}_j$ whose symplectic inner products are zero should be \begin{align}
    \label{eq:count_stab_ordered}
      & \  \dfrac{\left| \Spnd{2n}{d}  \right|}{\left| B_S\left(n,k,d \right) \right| \ \left| B_A\left( n,k,d \right) \right| \  \left| \Spnd{2k}{d} \right|}, \ \mathrm{when} \ k \ge 1 \ \notag \\ 
    & \dfrac{\left| \Sp  \right|}{\left| B_S \left( n, 0, d \right) \right|}, \ \mathrm{when} \ k=0.
\end{align} This is because the redundancy, which is equal to the number of choices for ordered bases, is $\left|\GLnd{n-k}\right|$. Thus one may relate our method with the method used by Gross in Theorem 20 in Ref. \cite{Gross2006} in this way. It is possible that one may compute the expression in Eq. \eqref{eq:count_stab_ordered} by suitably modifying the counting technique in Ref. \cite{Gross2006}, but our method has the advantage of being conceptually richer. 
% cite x from http://koreascience.or.kr/article/JAKO200634741588239.page corollary 2.8 and 2.9
%
% cite y sp(p_k)  L. C. Grove's Classical Groups and Geometric Algebra  thm 3.12

%cite z CRT https://scholarworks.lib.csusb.edu/etd-project/3373  Corollary 3.32.
%cite w http://sporadic.stanford.edu/Math120/Lecture9.pdf


\begin{thebibliography}{99}
\bibitem{Bennet96}\href{https://doi.org/10.1103/PhysRevA.54.3824}{Mixed-state entanglement and quantum error correction},
Bennett C. H., DiVincenzo D. P., Smolin J. A., and Wootters W. K.; Phys. Rev. A 54, 3824 (1996)
\bibitem{DiVincenzo2002}\href{https://doi.org/10.1109/18.985948}{Quantum data hiding}, D. P. DiVincenzo, D. W. Leung and B. M. Terhal, IEEE Transactions on Information Theory, vol. 48, no. 3, pp. 580-598, March 2002
%\bibitem{Emerson2005} ``Scalable noise estimation with random unitary operators", Emerson J., Alicki R., and Życzkowski K., J. Opt. B: Quantum Semiclass. Opt. 7 S347 (2005)
\bibitem{Knill2008}  \href{https://doi.org/10.1103/PhysRevA.77.012307}{Randomized benchmarking of quantum gates}, 
Knill E., Leibfried D., R. Reichle, J. Britton, R. B. Blakestad, J. D. Jost, C. Langer, R. Ozeri, S. Seidelin, and D. J. Wineland
Phys. Rev. A 77, 012307 (2008)
\bibitem{Magesan2011} \href{https://doi.org/10.1103/PhysRevLett.106.180504}{Scalable and Robust Randomized Benchmarking of Quantum Processes}, Magesan E., Gambetta J. M., Emerson J.,
Phys. Rev. Lett. 106, 180504 (2011)
\bibitem{Gottesmann99} \href{https://doi.org/10.48550/arXiv.quant-ph/9807006}{The Heisenberg Representation of Quantum Computers}, Gottesmann D., preprint at: 	arXiv:quant-ph/9807006. \newline  Journal reference: Group22: Proceedings of the XXII International Colloquium on Group Theoretical Methods in Physics, eds. S. P. Corney, R. Delbourgo, and P. D. Jarvis, pp. 32-43 (Cambridge, MA, International Press, 1999)
\bibitem{Aaronson2004} \href{https://doi.org/10.1103/PhysRevA.70.052328}{Improved simulation of stabilizer circuits}, Scott Aaronson and Daniel Gottesman, Phys. Rev. A 70, 052328 (2004)
\tanmayc{\bibitem{Gogioso22} The question was raised on twitter, and the ensuing discussion, and the answer may be obtained on  \href{https://twitter.com/markusheinrich_/status/1589523953143664642}{this link}. A twitter account may be required to access the twitter thread.} 
\bibitem{Lovitz22} \href{https://doi.org/10.22331/q-2022-04-20-692}{New techniques for bounding stabilizer rank},  Benjamin Lovitz and Vincent Steffan'', Quantum 6, 692 (2022).
\bibitem{Vourdas2004} \href{https://doi.org/10.1088/0034-4885/67/3/R03}{Quantum systems with finite Hilbert space}, A Vourdas 2004 Rep. Prog. Phys. 67 267
\bibitem{Cianci2013} \href{https://doi.org/10.1103/PhysRevA.87.062338}{Kitaev's $\Z$-code threshold estimates}, Duclos-Cianci G., and Poulin D., Phys. Rev. A 87, 062338 (2013)
\bibitem{Anwar2014}\href{https://doi.org/10.1088/1367-2630/16/6/063038}{Fast decoders for qudit topological codes}, Anwar H., Brown B. J., Campbell E. T., and Browne D. E.,  New J. Phys. 16 063038 (2014)
\bibitem{Campbell2012} \href{https://doi.org/10.1103/PhysRevX.2.041021}{Magic-State Distillation in All Prime Dimensions Using Quantum Reed-Muller Codes}, Campbell E .T, Anwar H., and Browne D. E., Phys. Rev. X 2, 041021 (2012)
\bibitem{Campbell2014} \href{https://doi.org/10.1103/PhysRevLett.113.230501}{Enhanced Fault-Tolerant Quantum Computing in $d$-Level Systems}, Campbell E. T., Phys. Rev. Lett. 113, 230501 (2014)
\bibitem{Krishna2019} \href{https://doi.org/10.1103/PhysRevLett.123.070507}{Towards Low Overhead Magic State Distillation}, Krishna A., and Tillich J. P., Phys. Rev. Lett. 123, 070507 (2019)
\bibitem{Heyfron2019} \href{https://doi.org/10.48550/arXiv.1902.05634}{A quantum compiler for qudits of prime dimension greater than $3$}, Heyfron L. E. and  Campbell E. T., arxiv:1902.05634 (2019)
\bibitem{OGorman17} \href{https://doi.org/10.1103/PhysRevA.95.032338}{Quantum computation with realistic magic-state factories}, O'Gorman J., and Campbell E. T., Phys. Rev. A 95, 032338 (2017)
\bibitem{Galois_qudit} \href{https://errorcorrectionzoo.org/c/galois_into_galois}{“Galois-qudit code”, The Error Correction Zoo (V. V. Albert \& P. Faist, eds.), 2022.}
\bibitem{modular_qudit} \href{https://errorcorrectionzoo.org/c/qudit_stabilizer}{``Modular-qudit stabilizer code”, The Error Correction Zoo (V. V. Albert \& P. Faist, eds.), 2022. }
\bibitem{Weyl1932} H. Weyl, ``Theory of Groups and Quantum Mechanics", Dutton, New York, (1932)
\bibitem{Schwinger1960} \href{https://doi.org/10.1073/pnas.46.4.570}{Unitary Operator Basis},  J. Schwinger, PNAS 46, 570 (1960)
\bibitem{Appleby2005} \href{https://doi.org/10.1063/1.1896384}{Symmetric informationally complete–positive operator valued measures and the extended Clifford group}, D. M. Appleby, J. Math. Phys. 46, 052107 (2005)
\bibitem{Hostens2005} \href{https://doi.org/10.1103/PhysRevA.71.042315}{Stabilizer states and Clifford operations for systems of arbitrary dimensions and modular arithmetic}, E. Hostens, J. Dehaene, and B. De Moor, Phys. Rev. A 71, 042315 (2005)
\bibitem{Gross2006} \href{https://doi.org/10.1063/1.2393152}{Hudson’s theorem for finite-dimensional quantum systems}, D. Gross , J. Math. Phys. 47, 122107 (2006)
\bibitem{Bullock2007} \href{https://iopscience.iop.org/article/10.1088/1751-8113/40/13/013}{Qudit surface codes and gauge theory with finite cyclic groups}, S. S. Bullock and G. K. Brennen, 2007 J. Phys. A: Math. Theor. 40 3481
\bibitem{Bombin2007} \href{https://doi.org/10.1063/1.2731356}{Homological error correction: Classical and quantum codes}, Bombin H., and Martin-Delgado M. A., J. Math. Phys. 48, 052105 (2007)
\bibitem{Appleby2012} \href{https://doi.org/10.26421/QIC12.5-6-3}{The monomial representations of the Clifford group}, Appleby D. M., Bengtsson I., Brierley S., Grassl M., Gross D., and Larsson J., Quantum Information and Computation, Vol.12 No.5 \& 6, 2012
\bibitem{Gheorghiu2014} \href{https://doi.org/10.1016/j.physleta.2013.12.009}{Standard form of qudit stabilizer groups}, Gheorghiu V., Phys. Lett. A, Vol 378, No. 5–6, P 505-509, (2014)
\bibitem{Bengtsson2017} \href{https://doi.org/10.1017/9781139207010}{Geometry of Quantum States. An Introduction to Quantum Entanglement}, Bengtsson I., and \.Zyczkowski K., 2nd Edition Cambridge: Cambridge University Press.
\bibitem{Artin} ``Algebra",  Artin, M., Algebra, second ed., Prentice Hall, Boston, MA, 2011. ISBN 0321998030, 9780321998033. \newline See Chapter 14, ``Linearly Algebra on a Ring''.
\bibitem{Gottesman97} D. Gottesman, \href{http://arxiv.org/abs/quant‑ph/
9705052}{Stabilizer Codes and Quantum Error Correction, Ph.D. thesis (CalTech)} .
\bibitem{Newman} M. Newman, ``Integral Matrices",  Academic Press 1972 
\bibitem{Han}  \href{http://koreascience.or.kr/article/JAKO200634741588239.page}{The General Linear Group over a Ring} Han, J.-C., (2006). Bull. Korean Math. Soc. 43(3), 619–626 \newline  See Corollary 2.8 and 2.9.
\bibitem{Lang}  \href{https://doi.org/10.1007/978-1-4613-0041-0}{Algebra}, Lang S.,  3rd edition, Graduate Texts in Mathematics, Springer New York, NY \newline  See Theorem 2.1 and Corollary 2.2.
\bibitem{Ireland} \href{https://doi.org/10.1007/978-1-4757-2103-4}{A Classical Introduction to Modern Number Theory}, Ireland K., and Rosen, M., Graduate Texts in Mathematics, 84 (2nd ed.), p.35. Springer (1990)
\bibitem{Beachy}. \href{https://doi.org/10.1017/CBO9781139173315}{Introductory Lectures on Rings and Modules}, Beachy J, London Mathematical Society Student Texts. Cambridge: Cambridge University Press (1999).  \newline See Theorem 1.2.7 (The fundamental homomorphism theorem) on p.27.
\bibitem{Grove} \href{http://dx.doi.org/10.1090/gsm/039}{Classical Groups and Geometric Algebra}, Grove L. C., Graduate Studies in Mathematics) (1st ed.), American Mathematical Society, 2001.
\newline Theorem 3.12, p. 27.
\bibitem{Nielsen}  \href{https://doi.org/10.1017/CBO9780511976667}{Quantum Computation and Quantum Information: 10th Anniversary Edition}, Nielsen, M., \& Chuang, I., Cambridge: Cambridge University Press \newline See chapter 10 ``Quantum error correction". 

\bibitem{Poulain} Poulain D., \href{https://doi.org/10.1103/PhysRevA.74.052333}{Optimal and efficient decoding of concatenated quantum block codes}, Poulain D., Phys. Rev. A 74, 052333 (2006)
\bibitem{hattice}  \href{https://mathoverflow.net/q/87951}{Is there a formula for the size of Symplectic group defined over a ring $Z/p^k Z$?}, \href{https://mathoverflow.net/users/121/s-carnahan}{S. Carnahan}
\bibitem{Vaidyanathan} \href{https://math.stackexchange.com/q/2044571}{Order of $GL(n,\mathbb{Z}/m \mathbb{Z})$}, \href{https://math.stackexchange.com/users/89789/prahlad-vaidyanathan}{Vaidyanathan P.}
\bibitem{Apostol} \href{https://doi.org/10.1007/978-1-4757-5579-4}{Introduction to analytic number theory}, Apostol, Tom M., Undergraduate Texts in Mathematics, New York-Heidelberg: Springer-Verlag, (1976)
\bibitem{q_series} \href{}{What is a $q$-series?}, Baruah N. D. , Berndt  B. C., Cooper S. , Huber  T., and Schlosser M. J., Ramanujan Rediscovered: Proceedings of a Conference on Elliptic Functions, Partitions, and q-Series in memory of K. Venkatachaliengar: Bangalore, 1-5 June, 2009, eds., Ramanujan Mathematical Society, Mysore, 2010, pp. 31-51
\bibitem{Robin} \href{https://doi.org/10.4064/aa-42-4-367-389}{Estimation de la fonction de Tchebychef $\Theta$ sur le k-ième nombre premier et grandes valeurs de la fonction $\omega(n)$ nombre de diviseurs premiers de $n$}, Robin, G., Acta Arithmetica 42 367-389, (1983)
%\bibitem{Hardy_Ram} Hardy, G. H. and Ramanujam, S., ``The normal number of prime factors of a number n", \href{https://www.scopus.com/inward/record.uri?eid=2-s2.0-0002894335&partnerID=40&md5=df8def0479677b0a6652d7347986a605}{Quarterly Journal of Mathematics, XLVIII, 1917, 76 – 92}
\bibitem{Bezout} \href{https://doi.org/10.1016/j.hm.2008.08.009}{Modular arithmetic before C.F. Gauss: Systematizations and discussions on remainder problems in 18th-century Germany}, Bullynck M., Historia Mathematica, Vol. 36, Issue 1, 2009.
\bibitem{Veitch12} \href{https://doi.org/10.1088/1367-2630/14/11/113011}{Negative quasi-probability as a resource for quantum computation}, 
 Veitch V., Ferrie C., Gross D., and Emerson J., New J. Phys. 14 113011 (2012)
\bibitem{Veitch14} \href{https://doi.org/10.1088/1367-2630/16/1/013009}{The resource theory of stabilizer quantum computation}, Veitch V., Mousavian S. A. H., Gottesman D., and Emerson J., New J. Phys. 16 013009 (2014)

\bibitem{Howard17} \href{https://doi.org/10.1103/PhysRevLett.118.090501}{Application of a Resource Theory for Magic States to Fault-Tolerant Quantum Computing}, 
 Howard, M. and Campbell, E., Phys. Rev. Lett. 118, 090501 (2017)

\bibitem{Pashayan15} \href{https://doi.org/10.1103/PhysRevLett.115.070501}{Estimating Outcome Probabilities of Quantum Circuits Using Quasiprobabilities}", Pashayan H., Wallman J. J., and Bartlett S. D., Phys. Rev. Lett. 115, 070501 (2015)

\bibitem{Heinrich19} \href{https://doi.org/10.22331/q-2019-04-08-132}{Robustness of Magic and Symmetries of the stabilizer Polytope}, Heinrich, M. and Gross, D. 	Quantum 3, 132 (2019).

\bibitem{Dai22} \href{https://doi.org/10.1007/s10773-022-05027-8}{Detecting Magic States via Characteristic Functions}, Dai, H., Fu, S. \& Luo, S., Int J Theor Phys 61, 35 (2022).

\bibitem{Hakkaku21} \href{https://doi.org/10.1103/PhysRevApplied.15.064027}{Comparative Study of Sampling-Based Simulation Costs of Noisy Quantum Circuits}, Shigeo Hakkaku and Keisuke Fujii, Phys. Rev. Applied 15, 064027 – Published 10 June 2021

\bibitem{STIM} \href{https://doi.org/10.22331/q-2021-07-06-497}{Stim: a fast stabilizer circuit simulator}, Gidney C.,  Quantum 5, 497 (2021).

\bibitem{Kueng15} \href{https://doi.org/10.48550/arXiv.1510.02767}{Qubit stabilizer states are complex projective 3-designs} Kueng R. and  Gross D., Preprint available at arXiv:1510.02767 

\bibitem{Gross2017} \href{https://doi.org/10.1007/s00220-021-04118-7}{Schur–Weyl Duality for the Clifford Group with Applications: Property Testing, a Robust Hudson Theorem, and de Finetti Representations}, Gross, D., Nezami, S. and Walter, M., Commun. Math. Phys. 385, 1325–1393 (2021).

\bibitem{Christandl07} \href{https://doi.org/10.1007/s00220-007-0189-3
}{One-and-a-Half Quantum de Finetti Theorems}, Christandl M., König R., Mitchison G. \& Renner R.,  Commun. Math. Phys. 273, 473–498 (2007). 

\tanmayc{\bibitem{Maslov17} \href{https://doi.org/10.1109/TIT.2018.2825602}{Shorter stabilizer circuits via Bruhat decomposition and quantum circuit transformations}, Maslov D. \& Roetteler M., IEEE Transactions on Information Theory, vol. 64, no. 7, pp. 4729-4738, (2018)

%\tanmayc{\bibitem{Maslov_comment} The subgroup $\Tnkd$ has appeared in the quantum computing literature earlier. For instance, in \cite{Maslov17}, $\mathrm{T}(n,0,2)$ (called the Borel subgroup, denoted as $\mathcal{B}_n$) is employed to construct a Bruhat decomposition for $\Spnd{2n}{2}$. This Bruhat decomposition is used to give an asymptotically tight parameterization of arbitrary stabilizer circuits. More recently, it appears in \cite{Jansen22}, as the subgroup of Clifford operations, whose action does not change the figures of merit of the distillation protocols, which the work studies.}


\bibitem{Jansen22} \href{https://doi.org/10.22331/q-2022-05-19-715}{Enumerating all bilocal Clifford distillation protocols through symmetry reduction}, Jansen S., Goodenough K., de Bone S.,  Gijswijt D., \& Elkouss D., Quantum 6, 715 (2022).
}

\tanmayc{\bibitem{Hanlon2005} \href{https://math.mit.edu/~dav/704handouts.html}{ Counting Points in $\mathrm{Sp}\left(2n,\mathbb{F}_q\right)/$Maximal
Parabolic Subgroup}, Hanlon G., 18.7w04 (Seminar in Algebra and Number Theory) Handouts, 15th from the top.}

\tanmayc{\bibitem{Hanlon_comment} In \cite{Hanlon2005}, a computational error is made in the RHS of the equation which is just below Eq. (7): the RHS should be $q^{\frac{k(k-1)}{2}} \Pi_{i=1}^{\mathbf{k}} \left( q^i - 1 \right)$. This computational error also seeps into the bottom-most expression for $M$ on page (3): an additional factor of $q^k-1$ should be there in the denominator on the RHS. Adjusting for this error, the expressions for $M$ on the bottom of page (3) in \cite{Hanlon2005}, and $\mathrm{Iso}(n,m,d)$ in Theorem (20) in \cite{Gross2006} are equal.}

\bibitem{Howard13} \href{https://doi.org/10.3390/e15062340
}{Quantum Contextuality with Stabilizer States}, Howard M., Brennan, E., Vala, J., Entropy 15, 2340-2362 (2013).

\bibitem{Howard15} \href{https://doi.org/10.1088/1751-8113/48/49/495303}{Classical codes in quantum state space}, Howard M., J. Phys. A: Math. Theor. 48 495303 (2015)

\bibitem{Dangniam20} \href{https://doi.org/10.1103/PhysRevResearch.2.043323}{Optimal verification of stabilizer states}, Dangniam N., Han Y.-G., Zhu H., Phys. Rev. Research 2, 043323 (2020)

\bibitem{Heimendahl20} \href{https://doi.org/10.48550/arXiv.2011.11651}{The axiomatic and the operational approaches to resource theories of magic do not coincide}, Heimendahl A., Heinrich M., and Gross D., arXiv:2011.11651 

\bibitem{Chaturvedi2010} \href{https://doi.org/10.1088/1751-8113/43/7/075302}{Wigner distributions for finite-state systems without redundant phase-point operators}, Chaturvedi S., Mukunda N., and Simon R., J. Phys. A: Math. Theor. 43 075302 (2010) 

%\bibitem{Andersson2019} \href{https://doi.org/10.1088/1751-8121/ab434e}{Aligned SICs and embedded tight frames in even dimensions}, Andersson O., and Dumitru I., J. Phys. A: Math. Theor. 52 425302 (2019)


%\bibitem{Braasch22} \href{https://doi.org/10.22331/q-2022-02-21-659}{A quantum prediction as a collection of epistemically restricted classical predictions}, Braasch Jr., W. F., and Wootters W. K., Quantum 6, 659 (2022).

          
\bibitem{user0} \href{https://math.stackexchange.com/users/389981/user0}user0, \href{https://math.stackexchange.com/q/2058822}{Reply on math.stackechange to query about embedding an $n$-tuple of $\Z$ into an invertible $n \times n$ matrix $A$.} 


%BIBLIOGRAPHY ENDS HERE

\end{thebibliography}
\end{document}